\DeclareMathOperator\arctanh{arctanh}
\newtheorem{theorem}{Theorem}[section]
\newtheorem{lemma}[theorem]{Lemma}
\theoremstyle{definition}
\newtheorem{definition}[theorem]{Definition}
\newtheorem{remark}[theorem]{Remark}
\newtheorem{fact}[theorem]{Fact}
\newcommand{\F}{\mathrm{F}_\mathrm{U}} 
\newcommand{\FCL}{\mathrm{F}_{\mathrm{cl}}} 
\newcommand{\FH}{\mathrm{F}_{\mathrm{H}}} 
\newcommand{\FGM}{\mathrm{F}_{\mathrm{M}}} 
\newcommand{\newfidelity}{Matsumoto fidelity\xspace} 
\newcommand{\standard}{Uhlmann\xspace} 
\newcommand{\newfidelities}{Matsumoto fidelities\xspace}
\newcommand{\Dpishort}{Monotonicity} 
\newcommand{\R}{\mathbb{R}}
\renewcommand{\tr}{\Tr}
\renewcommand{\succcurlyeq}{\succeq}
\renewcommand{\preccurlyeq}{\preceq}
\newcommand{\Id}{\mathbbm{1}}
\newcommand{\inner}[2]{\langle #1, #2 \rangle}
\renewcommand{\Tr}{\mathrm{Tr}} 
\newcommand{\half}{1/2} 
\newcommand{\comment}[1]{{}} 
\newcommand{\eps}{\varepsilon} 
\newcommand{\Range}{\mathrm{Image}}
\DeclareMathOperator{\diag}{diag} 
\numberwithin{equation}{section}
\title{A fidelity measure for quantum states based on the matrix geometric mean} 
\begin{document}

\author{
	Sam Cree\thanks{Stanford University, Stanford, California, USA  
and 
Perimeter Institute for Theoretical Physics, Waterloo, Ontario, Canada, \url{scree@stanford.edu}}
\and
Jamie Sikora\thanks{
Virginia Polytechnic Institute and State University, Blacksburg, Virginia, USA 
and 
Perimeter Institute for Theoretical Physics, Waterloo, Ontario, Canada, \url{sikora@vt.edu}}  
}  

\date{{January 12, 2021}} 

\maketitle 

\vspace{-20pt}
  
\begin{abstract} 
	Uhlmann's fidelity function is one of the most widely used similarity measures in quantum theory. 
	One definition of this function is that it is the minimum classical fidelity associated with a quantum-to-classical measurement procedure of two quantum states. 
	In 2010, Matsumoto introduced another fidelity function which is dual to Uhlmann's in the sense that it is the maximimum classical fidelity associated with a classical-to-quantum preparation procedure for two quantum states.
	Matsumoto's fidelity can also be defined using the well-established notion of the matrix geometric mean. 
 In this work, we examine Matsumoto's fidelity through the lens of semidefinite programming to give simple proofs that it possesses many desirable properties for a similarity measure, including monotonicity under quantum channels, joint concavity, and unitary invariance. 
Finally, we provide a geometric interpretation of this fidelity in terms of the Riemannian space of positive definite matrices, and show how this picture can be useful in understanding some of its peculiar properties.    
\end{abstract} 


\section{Introduction} 
  
In many applications of quantum information, it is important to be able to demonstrate that two quantum states are ``close'' to one another in some sense.
For example, one may wish to demonstrate that experimental data or numerical simulations closely approximate those from another given state, or to verify the validity of a quantum algorithm or error-correction procedure. 
Thus it is useful to find comparison measures $f(\rho ,\sigma )$ that represent the similarity or distance between two quantum states with density matrices $\rho  $ and $\sigma $.

A common way to develop comparison measures for quantum states is to start with a comparison measure of classical probability distributions and look for a quantum counterpart.
A state described by a density matrix $\rho $ with eigenvalues $\{p_i\}$ can be obtained by preparing each of its eigenstates $\ket{\psi _i}$ with probability equal to the corresponding eigenvalue $p_i$. 
In this way, $\left\{ p_i \right\}$ is a probability distribution associated with $\rho $. 
Any quantum state $\sigma $ that commutes with $\rho $ shares a set of eigenstates, meaning that both states can be simultaneously interpreted as classical probabilistic mixtures over those eigenstates.
It is reasonable to define a function of two density matrices $f(\rho ,\sigma )$ to be a \emph{quantization} of a classical comparison measure of two probability distributions, $f_{cl}(\left\{ p_i \right\},\left\{ q_i \right\})$, if it agrees in the case of commuting states, i.e., 
\begin{align}
	f\text{ \emph{quantizes} }f_{cl} \;\; \text{ if } \;\; [\rho ,\sigma ]=0 \implies f(\rho ,\sigma ) = f_{cl} ( \left\{ p_i \right\}, \left\{ q_i \right\} ),
	\label{eq:qtz}
\end{align}
with $\left\{ p_i \right\}$ and $\left\{ q_i \right\}$ the eigenvalues of $\rho $ and $\sigma $ respectively (labelled according to some indexing of a shared eigenbasis).

Many well-known comparison measures from quantum information theory arise as quantizations of classical quantities.
For example, the trace distance is the quantization of a classical quantity known as the \emph{Kolmogorov distance} between two probability distributions, the quantum relative entropy is a quantization of the classical \emph{Kullback-Leibler divergence}, and the quantum fidelity quantizes the 
\textit{classical fidelity} (or \textit{Bhattacharyya coefficient}). 
The trace distance, quantum relative entropy, and quantum fidelity have all found widespread application within the field of quantum information, with each being particularly convenient for calculations in specific contexts. 
This motivates the study of alternative comparison measures for quantum states, to develop a wider range of available tools for applications of quantum information theory.

In this endeavour, one can exploit a generic feature of quantization, namely its non-uniqueness\footnote{
	An analogous example of non-uniqueness being useful is in the quasi-probability distribution formulation of continuous-variable quantum mechanics, in which each quantum state is represented by a probability-like distribution over phase space (analogous to a classical probability distribution).
Depending on whether quantum operators are ordered normally, anti-normally, or symmetrically when defining this distribution, one obtains either the Glauber-Sudarshan $\mathrm P$ representation, the Husimi $\mathrm \sigma $ representation, or the Wigner quasi-probability distribution respectively.
Each of these distributions are useful in different contexts as they represent qualitatively distinct information about the quantum state -- for example, the Glauber-Sudarshan $\mathrm P$ representation is the best indicator of non-classicality, the Wigner distribution leads to the simplest evaluation of expectation values, and the Husimi $\mathrm \sigma $ representation is the only strictly non-negative distribution of the three.
};
for a given classical quantity, there is generally an infinite family of quantizations that agree in the classical limit.
For example, consider the classical fidelity, defined for probability distributions $\left\{ p_i \right\}$ and $\left\{ q_i \right\}$ as 
\begin{equation} 
	\FCL(\left\{ p_i \right\},\left\{ q_i \right\}) := \sum_{i=1}^n \sqrt{p_i q_i}. \label{eq:fclh}
\end{equation}  
By rewriting this in a more symmetric way, and then replacing the probability distributions with density matrices (and the sum with a trace), one obtains the (standard) {quantum} fidelity due to Uhlmann~\cite{uhlmann} 
\begin{align}
	\FCL(\left\{ p_i \right\},\left\{ q_i \right\}) = \sum_{i=1}^n \sqrt{\sqrt{p_i} q_i\sqrt{p_i}} \quad \longrightarrow\quad  \F(\rho ,\sigma ) := \tr \left( ( \rho ^{1/2} \sigma  \rho ^{1/2})^{1/2} \right)= \| \rho ^{1/2} \sigma ^{1/2} \|_1 , \label{eq:fu}
\end{align} 
where $\| \cdot \|_1$ is the trace norm, defined as\footnote{We use the notation $M^\dagger$ for the adjoint (or conjugate transpose) of $M$. For positive semidefinite matrix $A$, $A^{1/2}$ is the unique positive semidefinite matrix $M$ such that $M^2 = A$.} $\| M \|_1 :=\tr((MM^{\dagger})^{1/2})$. 
We refer to this from here on as the \emph{\standard fidelity} to distinguish it from other fidelity measures in this work.
This well-known measure of similarity for quantum states has many physically desirable and mathematically convenient properties, including those below (which we discuss in more detail later)  
\begin{itemize}
  \setlength\itemsep{0em}
	\item Symmetry in its inputs, 
	\item Ranges from 0 to 1,
	\item Attains 1 if and only if the states are identical, 
	\item Attains 0 for states that are orthogonal i.e.\ $\tr(\rho  \sigma  )=0$,
	\item Monotonicity under quantum channels, $\F(\mathcal{E}(\rho ),\mathcal{E}(\sigma )) \geq \F(\rho ,\sigma )$,
	\item Unitary invariance,
	\item Joint concavity in its inputs,
	\item Additivity over direct sums, and
	\item Multiplicativity over tensor products.
\end{itemize} 
One can verify that for commuting states, $\F$ is a valid quantization of $\FCL$ in the sense of \Cref{eq:qtz}.
For the purposes of this work, we say that any similar quantity that quantizes the classical fidelity is a ``quantum fidelity''\footnote{A famous paper of Jozsa \cite{josza} lists an alternative set of desirable axioms for a reasonable fidelity measure, one of which is that it should equal $\bra{\psi }\rho \ket{\psi }$ for a pure state $\ketbra{\psi }$ and a mixed state $\rho $.
	By choosing to relax this axiom in favour of fixing the classical limit as $\FCL$, we are simply studying a different family of quantities to those fitting into Jozsa's framework. 
	See \cite{fidelities} for a thorough discussion of a large range of fidelities in terms of these axioms.
We discuss how the fidelities discussed in this work behave when one state is pure in Subsection~\ref{sect:onepure}.
}.
Note that many quantum fidelities are not particularly interesting or useful; for example, a family of quantum fidelities is given by $\F(\rho ,\sigma ) + f([\rho ,\sigma ])$, where $f$ is an arbitrary function of the commutator satisfying $f(0)=0$.
But this family of fidelities generally fails many desirable basic properties such as those listed above. 

In fact, we only know of three quantum fidelities discussed in the quantum literature that satisfy the list of properties above, the first being the Uhlmann fidelity. The second is the Holevo fidelity\footnote{This was first discovered by Holevo \cite{holevo} and actually predates the \standard fidelity. 
	It has been studied sporadically under a number of different names: just-as-good fidelity \cite{recoverability}, pretty-good fidelity \cite{prettygood}, $A$-fidelity \cite{a}, overlap information \cite{bhip}, and affinity \cite{affinity}, and is also directly related to the quantum Tsallis relative entropy \cite{tsallis3}, and the R\'enyi relative entropy \cite{comparison}.
See \cite{recoverability} for more discussion of the history of this quantity.}, defined as

\begin{equation}  
	\FH(\rho ,\sigma ) := \tr( \rho ^{1/4} \sigma ^{1/2} \rho ^{1/4}),
	\label{eq:fh}
\end{equation}  
which is distinct from $\F$ (which is suggested by the fact that the matrix square root does not distribute over multiplication; in general 
$(ABA)^{1/2} \neq A^{1/2} B^{1/2} A^{1/2}$).

The main focus of this work is to study the third known fidelity satisfying the list of above properties.
We call it the \textit{\newfidelity} after its introduction by Matsumoto in \cite{matsumoto}, which is defined as 
\begin{equation}
	\FGM(\rho ,\sigma ) := \tr(\rho  \#\sigma  ) 
\;\; \text{ where } \;\; \rho \# \sigma  := \rho ^{1/2} (\rho  ^{-1/2} \sigma  \rho ^{-1/2})^{1/2} \rho ^{1/2},	\label{eq:res}
\end{equation}
for invertible quantum states $\rho $ and $\sigma $. 
If $\rho $ or $\sigma $ is singular, $\rho  \# \sigma $ can be defined via the limit
\begin{align}
\rho  \# \sigma  := \lim_{\varepsilon \to 0} (\rho +\varepsilon \Id ) \# (\sigma +\varepsilon \Id). \label{eq:limit}
\end{align}
This quantity satisfies all of the properties listed above.  
The binary operation $ \# $ is known as the \textit{matrix geometric mean} \cite{geometricmean} (see also \cite{bhatia}) as it naturally extends the notion of geometric mean for two positive numbers to the case of positive definite matrices\footnote{$A$ is positive definite (denoted $A \succ 0$) if $\vb v^{\dagger }\! A \vb v>0$ for all nonzero vectors $\vb v$. Similarly, $A$ is positive \textit{semidefinite} (denoted $A\succcurlyeq 0$) if $\vb v^{\dagger }\! A \vb v\geq 0$ for all vectors $\vb v$.}.
The matrix geometric mean has intricate connections to the geometry of quantum state space and also to quantum information theory (see \cite{tsallis1,uhlmann2010,tsallis2,sdpquantum7,optimality} for examples), and even plays some hidden roles in the study of the \standard fidelity (which we discuss in the next section). 
   
\paragraph{Authors' note:} We were made aware\footnote{We thank Mark M. Wilde for pointing this work out to us.} of the works by Matsumoto \cite{matsumoto,matsumotoSDP,matsumotof} in which he introduces this quantity while in the final stages of preparing this work. 
Therefore, although this work was done independently, it may be viewed as a review (instead of introduction) of this fidelity function.  
However, this work does offer novel contributions and perspectives which we summarize at the end of the introduction. 
  
\paragraph{A tale of two fidelities.}
Matsumoto showed several interesting ways in which the \newfidelity acts as a ``dual'' to the \standard fidelity.
The first relates to operational interpretations of the two quantities, which we now discuss.  
We say that two probability distributions $\left\{ p_i \right\} $ and $ \left\{ q_i \right\}$ can be \textit{obtained by a measurement} of states $\rho $ and $\sigma $ if there exists a POVM $\left\{M_i \right\}$ such that $\tr(M_i \rho ) = p_i$ and $\tr(M_i \sigma ) = q_i$.
The \standard fidelity has an operational interpretation as the \emph{minimal} classical fidelity between two probability distributions $\left\{p_i\right\}$ and $\left\{ q_i \right\}$ that can be \emph{obtained by a measurement} of $\rho $ and $\sigma $.

Consider now the procedure of preparing a quantum state from an initial classical probability distribution, which is in some sense the ``reverse'' of obtaining classical probabilities via measurement of a quantum state. 
We say that two probability distributions $\left\{ p_i \right\} $ and $ \left\{ q_i \right\}$ can \textit{prepare} states $\rho $ and $\sigma $ if there exists a quantum channel $\mathcal{E}$ such that\footnote{Here the $\diag$ operator simply puts a vector on the diagonal of a diagonal matrix.} $\mathcal{E}(\diag\left\{ p_i \right\}) = \rho $ and $\mathcal{E}(\diag\left\{ q_i \right\})=\sigma $.
Then the \newfidelity is the \emph{maximal} possible classical fidelity between classical probability distributions that can \emph{prepare} quantum states $\rho $ and $\sigma $. 
In other words, the \newfidelity is the maximum classical fidelity for a classical-to-quantum preparation procedure, whereas the \standard fidelity is the minimum classical fidelity for a quantum-to-classical measurement procedure.

Another interesting feature of these two fidelities is that they completely bound the space of monotonic fidelities.
{The \newfidelity is the smallest possible quantization of the classical fidelity satisfying monotonicity under quantum channels and the Uhlmann fidelity is the largest, i.e.\ if $\mathrm{F}$ is a quantization of $\FCL$ satisfying monotonicity, then for any $\rho $ and $\sigma $, we have}
\begin{equation} 
\FGM(\rho ,\sigma ) \leq \mathrm{F}(\rho ,\sigma ) \leq \F(\rho, \sigma). 
\end{equation}   
See \cite{matsumoto} for details.  
Thus although these two quantities agree for commuting states, they contrast dramatically, and in a sense, \emph{maximally}, for states that fail to commute. 

An example of this dramatic difference is for non-commuting pure states.
One can show that the \newfidelity is \textit{exactly zero} for any two distinct pure states, which can be verified from \Cref{eq:limit}.
This is in contrast to the \standard and Holevo fidelities, as we have for pure states $\ketbra{\psi }$ and $\ketbra{\phi }$,
\begin{align}
	\F(\ketbra{\psi} ,\ketbra{\phi }) &=  \qty|\braket{\psi }{\phi }|\label{eq:pure} \\
\quad \FH(\ketbra{\psi} ,\ketbra{\phi }) &= \qty|\braket{\psi }{\phi }|^2
\label{eq:pureh} .
\end{align}
We can rewrite the \newfidelity of two pure states in the following suggestive way to draw parallel with these\footnote{This may suggest the existence of some family of fidelities $\mathrm{F}_p$ such that $\mathrm{F}_p(\ketbra{\psi },\ketbra{\phi }) = \qty|\braket{\psi }{\phi }|^p$, with the \newfidelity ($p=\infty $), the Holevo fidelity ($p=2$), and the \standard fidelity ($p=1$) as special cases.},
\begin{align}
	\quad \FGM(\ketbra{\psi} ,\ketbra{\phi }) &=  \qty|\braket{\psi }{\phi }|^\infty .
	\label{eq:puregm} 
\end{align} 
The case of pure states, in which the three fidelities starkly disagree, can be thought of as the opposite to the classical limit in which they all coincide.
Pure states are in a sense the ``most quantum'' states, being fully coherent and not relying on decoherence or classical probabilistic mixtures to be prepared. 
Thus the differences between the three quantities  function as an indirect probe into the states' non-commutativity, or of their ``relative quantumness''. 

\label{sect:fg} 
      
\paragraph{Semidefinite programming.}
\label{sect:sdp} 
   
Semidefinite programming is a well-behaved class of optimization problems which have seen countless applications in the study of quantum theory, including 
convex geometry~\cite{sdpquantum7,watrous2009semidefinite,watrous2013simpler,sdpquantum8,sdpquantum9}, 
thermodynamics~\cite{jamie1}, 
computational complexity theory~\cite{jamie2,sdpquantum1,sdpquantum2,sdpquantum3}, 
cryptography~\cite{sdpquantum11,jamie3,jamie5,sdpquantum10}, 
Bell non-locality~\cite{jamie4,sdpquantum4,sdpquantum5}, 
and entanglement~\cite{sdpquantum6}, to just name a few. 
Fortunately, the Matsumoto fidelity can be formulated as a semidefinite program (abbreviated as SDP) which allows a convenient prescription for its calculation, and also provides a useful analytical definition with which many of its properties can be easily proven.  
  
The \newfidelity can be formulated as the optimal objective function value of the following SDP~\cite{matsumotoSDP} 
\begin{equation} \label{eq:FGMSDP}
	\FGM (\rho ,\sigma ) = \sup \left\{ \tr(W) : 
\left[ \begin{array}{cc} 
\rho & W \\ 
W & \sigma  
\end{array} \right] \succcurlyeq 0  
\right\}, 
\end{equation}
which conveniently avoids limits for non-invertible quantum states.
This form is especially useful when  one wishes to optimize $\FGM(\rho, \sigma)$ when one or both of the input states are not fixed.  
Furthermore, this definition is simple compared with \Cref{eq:limit}, which as we demonstrate shortly, makes it an easier starting point to prove many of its properties (especially since we can avoid dealing with limits for the most part). 

It is worth noting that it bears striking similarities to an SDP for the \standard fidelity%
\footnote{Note that there is an SDP formulation for the Holevo fidelity as well (see for example \cite{sdpquantum7}) but it has a slightly more complicated structure and we do not study it in this work.}~\cite{watrous2013simpler}
\begin{align} \label{eq:FSDP}
	\F(\rho ,\sigma ) 
& = \sup \left\{ \frac{1}{2} \tr(X) + \frac{1}{2} \tr(X^{\dagger }) :   
\left[ \begin{array}{cc} 
\rho & X \\ 
X^{\dagger } & \sigma  
\end{array} \right] \succcurlyeq 0 \right\} .
\end{align}
We note that the only difference is that the variable $X$ in the formulation above need not be Hermitian. 
This also immediately implies that the Uhlmann fidelity is no less than the \newfidelity for all choices of inputs (which we formally prove later). 
One might notice that this also follows from many other characterizations of these two quantities discussed throughout this work.  
  
\paragraph{Contributions of this work.} 

Although many of the properties presented here can be found throughout the works \cite{matsumoto}, \cite{matsumotof}, and \cite{matsumotoSDP}, we approach their proofs from a very different perspective.  
In particular, we use semidefinite programming to bypass the subtleties that are otherwise required for non-invertible density matrices, allowing for straightforward derivations of many of its properties. 
Also, we provide a novel geometric interpretation in terms of the Riemannian metric on the space of positive definite matrices.
Finally, we hope that this introduction helps publicize some of the particularly interesting results in \cite{matsumoto,matsumotoSDP,matsumotof}, as we were surprised to see that they have not received more attention in the literature\footnote{For example, \cite{fidelities} provides an otherwise comprehensive review of different quantum fidelities and their properties, but the Matsumoto fidelity does not make an appearance.}. 
We refer the interested reader to  \cite{matsumoto,matsumotoSDP,matsumotof} for further reading on the topic. 
These works also introduce a generalization to a family of quantum $f$-divergences, which has been further studied, see e.g.\ \cite{divergences,geometricrenyi}\footnote{{One other interesting note is that each of the three fidelities discussed here arise as special cases of three well-known families of quantum R\'enyi relative entropy. The Uhlmann fidelity is a special case of the sandwiched R\'enyi relative entropy with $\alpha =\frac{1}{2}$ \cite{Renyi2,Renyi1}; the Holevo fidelity a special case of the Petz-R\'enyi relative entropy with $\alpha =\frac{1}{2}$ \cite{petz2}; and the Matsumoto fidelity is a special case of the geometric R\'enyi divergence with $\alpha =\frac{1}{2}$ \cite{matsumotof}.}}.
    
\paragraph{Results and organization.} 
This work is structured as follows.
In \Cref{sect:gm}, we give some background information on the matrix geometric mean and discuss some of its hidden appearances in the study of the Uhlmann fidelity.  
In \Cref{sect:properties}, we prove a number of properties of the \newfidelity and compare them with those of the \standard fidelity and the Holevo fidelity. 
In particular, in Subsection~\ref{sect:cases}, we provide some concrete examples of how the three fidelities differ for certain special cases, namely when one state is pure or when both states are qubits. 
In \Cref{sect:geometry}, we introduce the geometric intepretation of the \newfidelity, and discuss it in detail for the qubit case.
We conclude in \Cref{sect:conclusion} and discuss open questions for future work.  


\section{Background on the matrix geometric mean} \label{sect:gm} 
   
In this section, we discuss properties of the matrix geometric mean and present a few places where it shows up in the study of the \standard fidelity.         


\subsection{Definitions and properties of the matrix geometric mean}  \label{sect:mgm}
    
We begin by recalling the definition from the introduction.
\begin{definition}[Matrix geometric mean] 
	For two positive \emph{definite} matrices $A$ and $B$, their matrix geometric mean is given by the formula
\begin{equation} 
A \# B := A^{\half} \left( A^{-\half} B A^{- \half} \right)^{\half} A^{\half}. 
\end{equation} 
For two positive \emph{semidefinite} matrices $A$ and $B$, we define their matrix geometric mean as 
\begin{equation} 
A \# B = \lim_{\eps \to 0} (A_{\eps} \# B_{\eps}) 
\label{eq:limit2}
\end{equation} 
where we use the notation
\begin{align} \label{shorthand}
X_{\eps} := X + \eps \Id  
\end{align}
here and throughout the paper for brevity. 
It can be shown that this limit does exist and thus {\Cref{eq:limit2}} is well-defined.  
\end{definition} 
   
The matrix geometric mean has a number of nice properties, some of which we present below.  

\begin{fact}[Properties of the matrix geometric mean] 
\label{basicprops} 
For any positive \emph{definite} matrices $A$ and $B$, we have the following properties:  
\begin{enumerate} 
\item Symmetry: $A \# B = B \# A$.  
\item Hermitian and positive definite: $A\# B\succ 0$.
\item $A \# B = A^{\half} B^{\half}$ if $A$ and $B$ commute. 
\item If $X$ is invertible, then $X(A \# B)X^\dagger  = (XAX^\dagger ) \# (XBX^\dagger )$.  
\item Inverses: $(A \# B)^{-1} = A^{-1} \# B^{-1}$.  
\item If $A, B, C, D$ satisfy $A \succcurlyeq B \succ 0$ and $C \succcurlyeq D \succ 0$, then $A \# C \succcurlyeq B \# D$.  
\item Arithmetic-geometric mean inequality: $\frac{1}{2}(A+B) \succcurlyeq A\# B$. 
\item {Distributive property:} {$(A \otimes B) \# (C \otimes D) = (A \# C) \otimes (B \# D) $.} 
\item For any positive (not necessarily completely positive) map $\Phi$, we have 
\begin{equation} 
\Phi(A) \# \Phi(B) \succcurlyeq \Phi(A \# B). 
\end{equation}  
\end{enumerate} 
\end{fact}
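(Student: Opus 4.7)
The plan is to unify almost all of these properties through the single characterization that for $A, B \succ 0$, the matrix geometric mean $A \# B$ is the \emph{unique} positive definite solution $Y$ to the Riccati equation $YA^{-1}Y = B$. Direct substitution verifies that $Y = A^{1/2}(A^{-1/2}BA^{-1/2})^{1/2}A^{1/2}$ is such a solution, and uniqueness follows via the change of variables $Y = A^{1/2}ZA^{1/2}$, which turns the equation into $Z^2 = A^{-1/2}BA^{-1/2}$ and forces $Z$ to be the unique positive definite square root of a fixed positive definite matrix. Properties 2 and 3 are then immediate, and all claims will extend from positive definite to positive semidefinite $A, B$ by the limiting procedure in \Cref{eq:limit2}.

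With the Riccati characterization in hand, properties 1, 4, 5, and 8 each reduce to plugging an ansatz into the appropriate Riccati equation and invoking uniqueness. For symmetry (1), inverting $YA^{-1}Y = B$ and rearranging gives $YB^{-1}Y = A$, identifying $Y$ as the geometric mean $B \# A$ as well. For congruence invariance (4), a direct computation shows $(XYX^\dagger)(XAX^\dagger)^{-1}(XYX^\dagger) = XBX^\dagger$ whenever $X$ is invertible. For the inverse property (5), inverting both sides of $YA^{-1}Y = B$ yields $Y^{-1}AY^{-1} = B^{-1}$, exhibiting $Y^{-1}$ as the Riccati solution for $A^{-1} \# B^{-1}$. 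For the tensor product property (8), one verifies $[(A\#C)\otimes(B\#D)](A\otimes B)^{-1}[(A\#C)\otimes(B\#D)] = C \otimes D$ using distributivity of inversion and multiplication over tensor products, and uniqueness identifies this quantity with $(A\otimes B)\#(C\otimes D)$.

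For the AM-GM inequality (7), substituting $B = YA^{-1}Y$ with $Y = A\#B$ and setting $Z = A^{-1/2}YA^{-1/2}$ (which is Hermitian) gives
\[
A + B - 2Y \;=\; A + YA^{-1}Y - 2Y \;=\; A^{1/2}(I - Z)^2 A^{1/2} \succcurlyeq 0.
\]
For monotonicity in the arguments (6), it is cleanest to establish monotonicity in each slot separately. For fixed $A \succ 0$, the map $B \mapsto A \# B$ is operator monotone because $B \mapsto A^{-1/2}BA^{-1/2}$ is monotone, the operator square root is operator monotone on the positive cone, and conjugation by $A^{1/2}$ preserves the L\"owner order. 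Monotonicity in the first slot then follows from symmetry (property 1), and chaining the two yields $A\#C \succcurlyeq B\#C \succcurlyeq B\#D$.

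The main obstacle is property 9 (monotonicity under arbitrary positive maps), since the natural approach of applying $\Phi$ block-wise to the $2{\times}2$ block matrix with diagonal $(A, B)$ and off-diagonal $A\#B$ only succeeds when $\Phi$ is 2-positive, not merely positive: a positive map need not preserve positivity of block matrices. The standard workaround, due to Ando, is to use an integral representation of the matrix geometric mean as a positive combination of harmonic-mean-like expressions of the form $(sA^{-1} + tB^{-1})^{-1}$, each of which obeys the desired monotonicity under positive maps via a Jensen-type operator inequality for the parallel sum. Integrating this pointwise inequality against the positive measure in the representation yields the claim $\Phi(A\#B) \preccurlyeq \Phi(A) \# \Phi(B)$, completing the list.
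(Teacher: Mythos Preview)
The paper does not actually prove this statement: it records it as a \emph{Fact} and simply cites \cite{karcher} for properties 1--7, says property 8 ``can be verified directly,'' and cites \cite{Ando} for property 9. Your proposal therefore goes well beyond what the paper does, supplying a coherent self-contained argument where the paper only gives pointers to the literature.

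Your unifying device---the Riccati characterization $A\#B$ as the unique positive definite solution of $YA^{-1}Y=B$---is the standard and cleanest route, and your derivations of properties 1--8 from it are correct. In particular the AM--GM computation via $A^{1/2}(I-Z)^2A^{1/2}$ and the monotonicity argument via operator monotonicity of the square root are both sound. For property 9 you correctly isolate the real obstruction (mere positivity does not let one push $\Phi$ through a $2\times 2$ block condition) and point to the Kubo--Ando integral representation reducing to the parallel-sum case; this is indeed the standard workaround, and since the paper itself defers entirely to \cite{Ando} here, your level of detail is already more than the paper provides. If you wanted to tighten that last step you would need to state precisely which Jensen-type inequality for the parallel sum you are invoking and why it holds for merely positive (not 2-positive) maps, but as a sketch it is accurate.
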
 
Properties 1--7 can be found in \cite{karcher}, Property 8 can be verified directly, and Property {9} was proven in \cite{Ando}. 
The interested reader is referred to the book \cite{bhatia} for a nice discussion on the topic.


\subsection{The matrix geometric mean and the \standard fidelity}  

Although we introduce the matrix geometric mean in order to study the \newfidelity, it also plays a role in the study of other quantum information quantities, such as the \standard fidelity.
The \standard fidelity can be expressed \cite{uhlmann2010} as 
\begin{align}
	\F(\rho ,\sigma ) = \tr\left(\rho \left( \rho ^{-1} \# \sigma \right)\right) 
\end{align}
when $\rho $ and $\sigma $ are invertible. 
This suggests the following fact about the gradient of the \standard fidelity (which is proven in \cite{optimality}). 
\begin{fact} 
For a fixed invertible quantum state $\sigma $, if we define  
\begin{equation} 
	g(\rho ) := \F(\rho ,\sigma ) \text{ for } \rho \text{ invertible},  
\end{equation} 
then we have 
\begin{equation} 
	\nabla g(\rho ) = \frac{1}{2} \left( \rho ^{-1} \# \sigma  \right).  
\end{equation}
\end{fact}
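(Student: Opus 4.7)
The plan is to differentiate $g(\rho) = \F(\rho,\sigma) = \tr((\rho^{1/2}\sigma\rho^{1/2})^{1/2})$ directly by chain rule. Let $\rho_t = \rho + tH$ with $H$ Hermitian, and write $R := \rho^{1/2}$ with Fr\'echet derivative $L$ satisfying the Sylvester equation $LR + RL = H$. Set $T := (R\sigma R)^{1/2}$ and $M := \rho^{-1}\#\sigma$, so that the statement amounts to showing $\dot g(0) = \tfrac{1}{2}\tr(HM)$ for every Hermitian $H$.

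The starting point is the trace-of-square-root derivative $\tfrac{d}{dt}\tr(A(t)^{1/2}) = \tfrac{1}{2}\tr(A(t)^{-1/2}\dot A(t))$, obtained by differentiating $A = (A^{1/2})^2$ and taking the trace. Applied to $A(t) = \rho_t^{1/2}\sigma\rho_t^{1/2}$, whose derivative at $t=0$ is $L\sigma R + R\sigma L$, this gives
\begin{equation*}
\dot g(0) \;=\; \tfrac{1}{2}\tr\!\bigl(T^{-1}(L\sigma R + R\sigma L)\bigr) \;=\; \tfrac{1}{2}\tr\!\bigl(L(\sigma R T^{-1} + T^{-1}R\sigma)\bigr)
\end{equation*}
after cyclicity of the trace.

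The crux is to recognize $\sigma R T^{-1} + T^{-1}R\sigma = MR + RM$, after which cyclicity together with the Sylvester equation collapses $\tfrac{1}{2}\tr(L(MR + RM)) = \tfrac{1}{2}\tr((LR + RL)M) = \tfrac{1}{2}\tr(HM)$ as desired. I would obtain this identification from two facts: (i) by Property 5 of \Cref{basicprops}, $M^{-1} = \rho \# \sigma^{-1} = R(R^{-1}\sigma^{-1}R^{-1})^{1/2}R$, and since $(R^{-1}\sigma^{-1}R^{-1})^{1/2} = (R\sigma R)^{-1/2} = T^{-1}$, this unwinds to $R T^{-1} R = M^{-1}$, hence $T^{-1} = R^{-1}M^{-1}R^{-1}$; and (ii) the matrix geometric mean $A\#B$ is the unique positive definite $X$ satisfying $XA^{-1}X = B$, which applied with $A = \rho^{-1}$, $B = \sigma$ yields $M\rho M = \sigma$, equivalently $\sigma M^{-1} = M\rho$. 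Combining, $\sigma R T^{-1} = \sigma M^{-1} R^{-1} = M\rho R^{-1} = MR$, and taking adjoints gives $T^{-1}R\sigma = RM$. I expect this algebraic collapse to be the main obstacle, as it is invisible from the bare formulas for $\F$ and $\#$ and becomes transparent only after invoking the defining quadratic equation for the matrix geometric mean.
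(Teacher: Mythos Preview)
Your proof is correct. The paper does not actually prove this statement; it is stated as a \emph{Fact} with the remark ``which is proven in \cite{optimality}'', preceded only by the observation that the identity $\F(\rho,\sigma)=\tr\!\bigl(\rho(\rho^{-1}\#\sigma)\bigr)$ ``suggests'' the gradient formula. So there is no in-paper argument to compare against.

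Your route --- differentiating $\tr\bigl((\rho^{1/2}\sigma\rho^{1/2})^{1/2}\bigr)$ via the Sylvester equation for $L=\tfrac{d}{dt}\rho_t^{1/2}$ and then collapsing the result using $T^{-1}=R^{-1}M^{-1}R^{-1}$ together with the Riccati identity $M\rho M=\sigma$ --- is a clean and fully rigorous derivation. Each step checks out: the trace-derivative formula $\tfrac{d}{dt}\tr(A^{1/2})=\tfrac{1}{2}\tr(A^{-1/2}\dot A)$ is standard for positive definite $A$; Property~5 of \Cref{basicprops} gives $M^{-1}=\rho\#\sigma^{-1}=RT^{-1}R$; the Riccati characterization $X A^{-1} X = B$ for $X=A\#B$ yields $\sigma M^{-1}=M\rho$; and the final cyclic rearrangement $\tr(L(MR+RM))=\tr((LR+RL)M)=\tr(HM)$ is immediate. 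The only thing one might add for completeness is a one-line justification that the Fr\'echet derivative of $\rho\mapsto\rho^{1/2}$ exists at positive definite $\rho$ (e.g.\ since the Sylvester operator $L\mapsto LR+RL$ is invertible when $R\succ 0$), but this is routine.
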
 

The matrix geometric mean also appears in the characterization of the \standard fidelity given by Alberti \cite{alberti}, below. 
\begin{fact}[Alberti's Theorem] 
For any quantum states $\rho $ and $\sigma $, we have 
\begin{equation} 
	\F(\rho ,\sigma )^2 = \inf_{\tau \succ 0} \inner{\tau}{\rho } \inner{\tau^{-1}}{\sigma }. 
\end{equation}
\end{fact}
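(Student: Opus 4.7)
The plan is to prove Alberti's identity by establishing the two directions separately: an SDP-based upper bound $\F(\rho,\sigma)^2 \leq \inner{\tau}{\rho}\inner{\tau^{-1}}{\sigma}$ valid for every $\tau\succ 0$, and then a matching lower bound by exhibiting an explicit minimizer in the invertible case and extending by continuity.

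For the upper bound, I would invoke the SDP characterization in \Cref{eq:FSDP} together with a carefully chosen PSD witness. For each $\tau \succ 0$ and each real parameter $t > 0$, consider the block matrix
\[
T_t := \left[\begin{array}{cc} t\tau & -\Id \\ -\Id & t^{-1}\tau^{-1}\end{array}\right].
\]
A one-line Schur-complement check confirms $T_t \succcurlyeq 0$, since the Schur complement of $t\tau$ is $t^{-1}\tau^{-1} - t^{-1}\tau^{-1} = 0$. Consequently, for any feasible $X$ in \Cref{eq:FSDP} the Hilbert--Schmidt inner product of $T_t$ with the PSD block matrix built from $\rho, X, X^\dagger, \sigma$ is nonnegative, which rearranges to
\[
\tr(X) + \tr(X^\dagger) \;\leq\; t\inner{\tau}{\rho} + t^{-1}\inner{\tau^{-1}}{\sigma}.
\]
Taking the supremum over feasible $X$ converts the left-hand side into $2\F(\rho,\sigma)$, after which minimizing the right-hand side over $t > 0$ via the AM--GM inequality yields $2\sqrt{\inner{\tau}{\rho}\inner{\tau^{-1}}{\sigma}}$; squaring gives the claimed inequality.

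For the matching achievability direction when $\rho, \sigma \succ 0$, I would produce the explicit minimizer
\[
\tau_\star := \rho^{-1/2}\bigl(\rho^{1/2}\sigma\rho^{1/2}\bigr)^{1/2}\rho^{-1/2},
\]
motivated by the closed form $\F(\rho,\sigma) = \tr\bigl((\rho^{1/2}\sigma\rho^{1/2})^{1/2}\bigr)$. Applying cyclicity of the trace shows that both $\inner{\tau_\star}{\rho}$ and $\inner{\tau_\star^{-1}}{\sigma}$ collapse to $\F(\rho,\sigma)$, so their product equals $\F(\rho,\sigma)^2$ exactly and the infimum is attained in this case.

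To cover arbitrary (possibly singular) $\rho$ and $\sigma$, I would apply the invertible case to the perturbed states $\rho_\eps$ and $\sigma_\eps$ (in the notation of \Cref{shorthand}) and then let $\eps \to 0$, using continuity of $\F$ on positive semidefinite matrices to pass to the limit. I expect the main subtlety to be that the infimum need not be attained in the singular case --- for instance when $\rho$ and $\sigma$ have disjoint supports, one can drive $\inner{\tau}{\rho}\inner{\tau^{-1}}{\sigma}$ arbitrarily close to zero without any single $\tau$ realizing the value --- so it is important that the statement is phrased as an infimum rather than a minimum, and one should verify that the perturbed optimizers $\tau_{\star,\eps}$ yield inner-product values converging to $\F(\rho,\sigma)^2$.
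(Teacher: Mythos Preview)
The paper does not supply a proof of Alberti's Theorem; it is stated as a \textbf{Fact} with a citation to \cite{alberti}, and the paper only adds the remark (without verification) that in the invertible case the optimal $\tau$ is $\rho^{-1}\#\sigma$. Your argument is correct and self-contained, so there is nothing to compare against on the paper's side.

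A couple of observations that connect your proof to the paper's framework. First, your explicit minimizer $\tau_\star = \rho^{-1/2}(\rho^{1/2}\sigma\rho^{1/2})^{1/2}\rho^{-1/2}$ is precisely $\rho^{-1}\#\sigma$, so your achievability step both recovers and verifies the paper's remark. Second, your upper-bound argument is weak duality for the SDP~\eqref{eq:FSDP} in disguise: the block matrix $T_t$ (after a harmless sign change on the off-diagonal blocks) is a feasible point of the dual~\eqref{Alberti} with $Y=t\tau$ and $Z=t^{-1}\tau^{-1}$, and your trace-pairing inequality is exactly the weak-duality bound.

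Your limiting step for singular states can be streamlined and your worry about tracking $\tau_{\star,\eps}$ avoided. Since $\rho\preccurlyeq\rho_\eps$ and $\sigma\preccurlyeq\sigma_\eps$, for every fixed $\tau\succ 0$ one has $\inner{\tau}{\rho}\inner{\tau^{-1}}{\sigma}\leq\inner{\tau}{\rho_\eps}\inner{\tau^{-1}}{\sigma_\eps}$; taking the infimum over $\tau$ and invoking the invertible case gives
\[
\inf_{\tau\succ 0}\inner{\tau}{\rho}\inner{\tau^{-1}}{\sigma}\;\leq\;\F(\rho_\eps,\sigma_\eps)^2,
\]
and continuity of $\F$ as $\eps\to 0$ finishes the lower bound without ever examining the perturbed optimizers themselves.
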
 

It is easy to verify that if $\rho $ and $\sigma $ are invertible, then an optimal choice of $\tau$ is 
\begin{equation} 
\tau := \rho ^{-1} \# \sigma . 
\end{equation} 

Finally, as mentioned in the introduction, we have that the fidelity is equal to the so-called \emph{measurement fidelity}, described below.  
\begin{fact} 
For any quantum states $\rho $ and $\sigma $ and POVM $(M_1, \ldots, M_m)$, we have that 
\begin{equation} \label{msmtfid}
	\F(\rho ,\sigma ) \leq \FCL(p, q) 
\end{equation} 
where $p_i := \inner{M_i}{\rho }$ and $q_i := \inner{M_i}{\sigma }$ for all $i \in \{ 1, \ldots, m \}$. 
Moreover, there exists a POVM such that the above holds with equality. 
\end{fact}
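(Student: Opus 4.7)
The plan is to prove the two halves separately. For the inequality $\F(\rho,\sigma) \leq \FCL(p,q)$, I would invoke the monotonicity of $\F$ under quantum channels together with the quantization property (\Cref{eq:qtz}). The measurement by the POVM $(M_1,\ldots,M_m)$ is implemented by the CPTP map $\mathcal{E}(X) := \sum_i \inner{M_i}{X}\kb{i}$, which sends $\rho \mapsto \diag(p)$ and $\sigma \mapsto \diag(q)$. Since the two output states commute, $\F(\diag(p),\diag(q)) = \FCL(p,q)$, and monotonicity then yields $\F(\rho,\sigma) \leq \F(\diag(p),\diag(q)) = \FCL(p,q)$.

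For the achievability half, I would construct an explicit optimal POVM using the matrix geometric mean. Assume first that $\rho$ and $\sigma$ are invertible, and recall the identity $\F(\rho,\sigma) = \tr(\rho(\rho^{-1} \# \sigma))$ from the preceding subsection. Set $N := \rho^{-1} \# \sigma$, which is Hermitian and positive definite, and fix an orthonormal eigenbasis $\{\ket{w_i}\}$ of $N$ with eigenvalues $\lambda_i \geq 0$; the rank-one projectors $M_i := \kb{w_i}$ then form a POVM since they sum to $\Id$. Two short computations close the loop. First, $\F(\rho,\sigma) = \tr(\rho N) = \sum_i \lambda_i \bra{w_i}\rho\ket{w_i} = \sum_i \lambda_i p_i$. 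Second, squaring the defining equation $\rho^{1/2} N \rho^{1/2} = (\rho^{1/2}\sigma\rho^{1/2})^{1/2}$ and conjugating by $\rho^{-1/2}$ gives $\sigma = N \rho N$, so $q_i = \bra{w_i} N \rho N \ket{w_i} = \lambda_i^2 p_i$. Combining these, $\sqrt{p_i q_i} = \lambda_i p_i$, and summing reproduces $\FCL(p,q) = \F(\rho,\sigma)$, saturating the bound from the first paragraph.

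The noninvertible case is handled by running the construction with $\rho_\eps$ and $\sigma_\eps$ in the shorthand of \Cref{shorthand} and letting $\eps \to 0$, exploiting continuity of $\F$, $\FCL$, and $\#$ (the last already built into \Cref{eq:limit2}); one must then observe that the limit of a POVM sequence remains a POVM and that both fidelities converge. The main obstacle is pinpointing the correct POVM; once the identity $\F(\rho,\sigma) = \tr(\rho(\rho^{-1}\#\sigma))$ is accepted as input, the operator $\rho^{-1}\#\sigma$ essentially advertises itself as the ``classical witness'' of the fidelity, and what remains is just the two-line squaring argument above.
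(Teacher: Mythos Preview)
The paper does not actually prove this statement; it is presented as a known \emph{Fact} with no proof, followed only by the remark that for invertible $\rho$ and $\sigma$ the POVM achieving equality is the measurement in the eigenbasis of $\rho^{-1}\#\sigma$. Your proposal is correct and supplies exactly the details the paper leaves implicit: your inequality half via the measurement channel and monotonicity is the standard argument, and for achievability you select precisely the POVM the paper names and verify optimality via the clean identity $\sigma = N\rho N$ with $N=\rho^{-1}\#\sigma$.

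One small point to tighten in the noninvertible case: the sequence of POVMs built from the eigenbases of $\rho_\eps^{-1}\#\sigma_\eps$ need not converge outright, so you should invoke compactness of the set of $n$-outcome POVMs to pass to a convergent subsequence before taking the limit. With that clarification the argument goes through.
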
  
   
It turns out that the POVM that makes \eqref{msmtfid} hold with equality is the measurement in the basis of $\rho ^{-1} \# \sigma $, when $\rho $ and $\sigma $ are invertible. 
      
       
\section{Properties of the \newfidelity} \label{sect:properties} 
   
Due to its prevalence in quantum information, many useful properties of the \standard fidelity have been discovered.
In this section we list some properties of the \newfidelity and organize the properties with respect to how they compare with the \standard fidelity. 
To keep the presentation clean, we only compare it to the \standard fidelity. 
In Table~\ref{tab:props}, we summarize the properties and then also compare them to those of the Holevo fidelity.  

\begin{remark} 
Note that many of the properties presented in this section hold for general positive semidefinite matrices, i.e., they do not require the unit trace condition. 
However, we present and prove them for quantum states. 
It should be clear to the interested reader to see which require the unit trace condition and which do not. 
\end{remark}
   
We begin by proving the validity of the semidefinite program for the \newfidelity presented in \Cref{eq:FGMSDP}, justifying its use throughout this section. 
It follows almost immediately from the given lemma. 

\begin{lemma} \label{handylemma}
Given positive semidefinite matrices $P$ and $Q$, if $W$ satisfies 
\begin{equation} \label{eqlem32}
\left[ \begin{array}{cc} 
P & W \\ 
W & Q  
\end{array} \right] \succcurlyeq 0 
\end{equation} 
then $W \preceq P \# Q$. 
Moreover, $P \# Q$ satisfies \Cref{eqlem32}. 
\end{lemma}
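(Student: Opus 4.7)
The strategy is to reduce both claims to the invertible case via the Schur complement criterion for $2 \times 2$ block positivity, then extend to general positive semidefinite $P, Q$ by the $\varepsilon \to 0$ limit in \Cref{eq:limit2}.

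For the ``moreover'' direction, I first assume $P, Q \succ 0$ and set $W := P \# Q$. A direct computation, using $P^{1/2} P^{-1} P^{1/2} = \Id$ to cancel the middle factor, gives
\[ W P^{-1} W \;=\; P^{1/2}\bigl(P^{-1/2} Q P^{-1/2}\bigr)^{1/2}\bigl(P^{-1/2} Q P^{-1/2}\bigr)^{1/2} P^{1/2} \;=\; Q, \]
so the Schur complement of $P$ in the block matrix vanishes and the matrix is PSD. For general PSD $P, Q$, I apply this argument to $P_\varepsilon$ and $Q_\varepsilon$ and let $\varepsilon \to 0$: the PSD cone is closed, so the limiting block matrix built from $P$, $Q$, and $P \# Q$ (via \Cref{eq:limit2}) remains PSD.

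For the main direction, block Hermiticity forces $W = W^\dagger$. Suppose first that $P \succ 0$. The Schur complement yields $W P^{-1} W \preceq Q$. Setting $Y := P^{-1/2} W P^{-1/2}$ (Hermitian) and $C := P^{-1/2} Q P^{-1/2}$ (PSD), this becomes $Y^2 \preceq C$, while the desired bound $W \preceq P \# Q$ becomes, after conjugating by $P^{-1/2}$ on both sides, the bound $Y \preceq C^{1/2}$. Conjugating back by $P^{1/2}$ then delivers the lemma. For singular $P$, I perturb to $P_\varepsilon, Q_\varepsilon$ (adding $\varepsilon \Id$ to the diagonal preserves block positivity), apply the invertible case to get $W \preceq P_\varepsilon \# Q_\varepsilon$, and let $\varepsilon \to 0$ using \Cref{eq:limit2}.

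The main obstacle is the implication ``$Y = Y^\dagger$ and $Y^2 \preceq C$ imply $Y \preceq C^{1/2}$'', which does not follow directly from operator monotonicity because $t \mapsto t^2$ is not operator monotone, so one cannot take square roots of $Y^2 \preceq C$ in general. My plan is to split the target as $Y \preceq |Y| \preceq C^{1/2}$. The first inequality is immediate from the spectral decomposition of $Y$ since $|\lambda| \geq \lambda$ for every real eigenvalue. The second uses that $|Y|$ is PSD with $|Y|^2 = Y^2 \preceq C$, so operator monotonicity of the (genuinely monotone) map $t \mapsto t^{1/2}$ on the PSD cone gives $|Y| = (|Y|^2)^{1/2} \preceq C^{1/2}$, completing the argument.
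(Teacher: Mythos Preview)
Your argument is correct and follows essentially the same route as the paper's proof (Schur complement reduction, operator monotonicity of the square root, then the $\varepsilon\to 0$ limit). Two minor presentational differences: for the ``moreover'' part the paper exhibits an explicit Gram factorization via the unitary $U_\varepsilon = Q_\varepsilon^{-1/2}P_\varepsilon^{1/2}(P_\varepsilon^{-1/2}Q_\varepsilon P_\varepsilon^{-1/2})^{1/2}$ rather than your direct check that $WP^{-1}W=Q$; and for the main direction your explicit splitting $Y\preceq |Y|\preceq C^{1/2}$ is in fact more careful than the paper, which invokes ``taking square roots preserves the partial ordering'' without isolating the case where $Y$ has negative eigenvalues.
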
 

We prove this in the appendix, restated as \Cref{handylemmaproof}.

\begin{lemma}[SDP formulation]  \label{SDPlemma} 
For all quantum states $\rho$ and $\sigma$, we have 
\begin{equation} 
	\FGM (\rho ,\sigma ) = \max \left\{ \Tr(W) : 
\left[ \begin{array}{cc} 
\rho & W \\ 
W & \sigma  
\end{array} \right] \succcurlyeq 0   
\right\}. \label{eq:FGSDP} 
\end{equation} 
Moreover, the maximum is attained at  
$W = \rho \# \sigma $.   
\end{lemma}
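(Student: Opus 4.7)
The plan is to deduce the lemma directly from \Cref{handylemma}, which does essentially all the work. The proof splits into an upper bound argument and an attainability argument, with no substantial calculation in either.

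First, I would establish the upper bound. Any feasible $W$ in the SDP satisfies
\[
\left[ \begin{array}{cc} \rho & W \\ W & \sigma \end{array} \right] \succcurlyeq 0,
\]
which in particular forces $W$ to be Hermitian (since the full block matrix is Hermitian and $W$ appears in both off-diagonal positions). Applying \Cref{handylemma} with $P = \rho$ and $Q = \sigma$ gives $W \preccurlyeq \rho \# \sigma$. Since $\rho \# \sigma - W \succcurlyeq 0$, taking traces yields $\Tr(W) \leq \Tr(\rho \# \sigma) = \FGM(\rho, \sigma)$, where the last equality is the definition of $\FGM$ from \Cref{eq:res} (extended via the limit \Cref{eq:limit} when $\rho$ or $\sigma$ is singular).

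Next, I would check that this bound is attained by the choice $W = \rho \# \sigma$. The second assertion of \Cref{handylemma} tells us that $\rho \# \sigma$ is itself feasible, so it is a valid choice in the optimization and achieves the objective value $\Tr(\rho \# \sigma) = \FGM(\rho, \sigma)$. Combined with the upper bound from the previous paragraph, this shows the supremum is both equal to $\FGM(\rho, \sigma)$ and actually attained (hence $\max$ rather than $\sup$), with $W = \rho \# \sigma$ an optimizer.

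The only real obstacle is hidden inside \Cref{handylemma}, which is deferred to the appendix; given that lemma, the present proof is a short two-step argument. A minor point worth emphasizing is that because \Cref{handylemma} is stated for arbitrary positive semidefinite $P$ and $Q$, no separate invertibility assumption on $\rho$ or $\sigma$ is needed, and the limit definition of $\rho \# \sigma$ in \Cref{eq:limit} is never explicitly invoked in the argument. This is precisely the advantage of the SDP formulation advertised in the introduction.
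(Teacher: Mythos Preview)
Your proposal is correct and follows essentially the same approach as the paper: both arguments use \Cref{handylemma} to show that any feasible $W$ satisfies $W \preccurlyeq \rho \# \sigma$ (giving the upper bound $\Tr(W) \leq \FGM(\rho,\sigma)$) and then invoke the second part of \Cref{handylemma} to exhibit $W = \rho \# \sigma$ as a feasible point attaining that bound. Your additional remarks on Hermiticity and the avoidance of limits are accurate but peripheral; the substance of the proof matches the paper's exactly.
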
 

\begin{proof}  
 
Let $\alpha$ be the optimal value of the above SDP. 
If $W$ is feasible, by \Cref{handylemma} we know that $W \preceq \rho \# \sigma$. 
Thus, 
\begin{equation} 
\alpha \leq \Tr(\rho \# \sigma) = \FGM(\rho, \sigma). 
\end{equation} 
On the other hand, $\rho \# \sigma$ is feasible, and thus 
\begin{equation} 
\alpha \geq \Tr(\rho \# \sigma) = \FGM(\rho, \sigma)  
\end{equation} 
as desired. 
 
\end{proof} 
      
We use the above SDP formulation to show many of the following properties of the \newfidelity.  

\begin{lemma} \label{fidbound}
	For any quantum states $\rho $ and $\sigma $, we have $\FGM(\rho ,\sigma ) \leq \F(\rho ,\sigma )$.  
\end{lemma}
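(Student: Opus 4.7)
The plan is to compare the two SDP characterizations directly: Matsumoto's fidelity is given by \Cref{eq:FGMSDP} (now justified by \Cref{SDPlemma}), and Uhlmann's fidelity is given by \Cref{eq:FSDP}. The only difference between the two programs is that the variable $X$ in the Uhlmann SDP is not required to be Hermitian, whereas the variable $W$ in the Matsumoto SDP is (being a diagonal block of a positive semidefinite matrix and appearing both as itself and its adjoint in the off-diagonal blocks forces $W = W^\dagger$). So the Matsumoto SDP is, essentially by inspection, a restriction of the Uhlmann SDP to Hermitian off-diagonal variables.

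More concretely, I would take any $W$ feasible for \eqref{eq:FGSDP} and set $X := W$. Since $W$ is Hermitian, $X^\dagger = W$, so the block matrix in \eqref{eq:FSDP} coincides with the one in \eqref{eq:FGSDP} and is therefore positive semidefinite. The objective value of the Uhlmann SDP on this choice is
\begin{equation}
\tfrac{1}{2}\Tr(X) + \tfrac{1}{2}\Tr(X^\dagger) = \tfrac{1}{2}\Tr(W) + \tfrac{1}{2}\Tr(W) = \Tr(W),
\end{equation}
which matches the Matsumoto objective. Hence every feasible point of the Matsumoto SDP gives a feasible point of the Uhlmann SDP with the same objective, so
\begin{equation}
\FGM(\rho,\sigma) \;\leq\; \F(\rho,\sigma).
\end{equation}

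There is no real obstacle here; the entire argument is the observation already flagged in the text just after \eqref{eq:FSDP}. The one small thing to be slightly careful about is that the Matsumoto SDP attains its maximum (by \Cref{SDPlemma}, at $W = \rho \# \sigma$), so one does not need to fuss with suprema versus maxima on the left-hand side. Everything else is immediate from the inclusion of feasible sets.
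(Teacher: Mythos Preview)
Your proposal is correct and takes essentially the same approach as the paper: the paper's proof also observes that adding the constraint $X = X^\dagger$ to the Uhlmann SDP~\eqref{eq:FSDP} recovers the Matsumoto SDP~\eqref{eq:FGSDP}, so the latter, being more constrained, has the smaller optimal value. Your version spells out the feasible-point inclusion slightly more explicitly, but the argument is the same.
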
  

\begin{proof} 
It was shown in \cite{watrous2013simpler} that 
\begin{equation} 
	\F(\rho ,\sigma ) = \max \left\{ \frac{1}{2} \Tr(X) + \frac{1}{2} \Tr(X^\dagger ) : 
\left[ \begin{array}{cc} 
\rho & X \\ 
X^\dagger  & \sigma  
\end{array} \right] \succcurlyeq 0   
\right\}. \label{eq:FSDP} 
\end{equation} 
By adding the constraint $X = X^\dagger $ to the above SDP, we recover SDP~(\ref{eq:FGSDP}) which exactly characterizes the \newfidelity.  
Since the above SDP is a maximization and is less constrained than SDP~(\ref{eq:FGSDP}), the optimal objective function value can only increase. 
\end{proof}  

\begin{remark} 
Note that many of the properties discussed shortly in this work can be proven \emph{for invertible quantum states} by invoking known properties of the matrix geometric mean. 
However, we prove them here for \emph{all quantum states}, and many of our proofs are simple and do not rely on known properties (although we do occasionally use them). 
For instance, by dealing with limits in the proof of Lemma~\ref{handylemma} we avoid dealing with limits in many of the upcoming proofs.
\end{remark}


\subsection{Properties shared with the \standard fidelity} 

Here we present some of the properties of the \newfidelity that are shared with the \standard fidelity. 

\begin{lemma}[Symmetry] 
For all quantum states $\rho $ and $\sigma $, we have 
\begin{equation} 
	\FGM(\rho ,\sigma ) = \FGM(\sigma , \rho). \label{item:sym} 
\end{equation} 
\end{lemma}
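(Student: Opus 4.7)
The plan is to leverage the SDP formulation from \Cref{SDPlemma} and exploit the manifest symmetry of the feasibility constraint under swapping the two diagonal blocks. Specifically, I would begin by observing that any feasible $W$ for the SDP defining $\FGM(\rho,\sigma)$ must be Hermitian (since the $2\times 2$ block matrix is required to be positive semidefinite, hence Hermitian, forcing the off-diagonal blocks to be adjoints of each other; as both are $W$, we get $W = W^\dagger$).

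Next I would introduce the swap operator $S = \left[\begin{smallmatrix} 0 & \Id \\ \Id & 0\end{smallmatrix}\right]$, which is unitary. Conjugation by $S$ preserves positive semidefiniteness, and a short computation gives
\begin{equation}
S \left[\begin{array}{cc} \rho & W \\ W & \sigma \end{array}\right] S^\dagger = \left[\begin{array}{cc} \sigma & W \\ W & \rho \end{array}\right].
\end{equation}
Thus $W$ is feasible for the SDP defining $\FGM(\rho,\sigma)$ if and only if $W$ is feasible for the SDP defining $\FGM(\sigma,\rho)$. Since the objective $\Tr(W)$ depends only on $W$ and not on the order of the diagonal blocks, the two SDPs have identical feasible sets and objectives, hence the same optimal value.

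There is no real obstacle here — the result is essentially a one-line consequence of the SDP formulation. As an alternative one could argue directly using Property 1 of \Cref{basicprops}, namely $\rho \# \sigma = \sigma \# \rho$, together with the fact from \Cref{SDPlemma} that the optimum is attained at $W = \rho \# \sigma$, giving $\FGM(\rho,\sigma) = \Tr(\rho\#\sigma) = \Tr(\sigma\#\rho) = \FGM(\sigma,\rho)$; but the SDP-based argument is preferable in the spirit of this section since it does not invoke limit arguments or known properties of the matrix geometric mean.
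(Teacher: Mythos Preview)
Your proof is correct and follows the same approach as the paper, which simply states that symmetry is immediate from the SDP formulation \eqref{eq:FGSDP}. You have merely made explicit the swap-conjugation argument underlying that one-line observation.
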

\begin{proof} 
This follows immediately from the SDP formulation \eqref{eq:FGSDP}. 
\end{proof}
  
\begin{lemma}[Bounds] \label{lem:bounds}
For all quantum states $\rho $ and $\sigma $, we have 
\begin{equation}
	0 \leq \FGM(\rho ,\sigma ) \leq 1. \label{item:bounds} 
\end{equation} 
\end{lemma}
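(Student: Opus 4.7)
The plan is to establish the two inequalities separately, leaning heavily on the SDP formulation in \Cref{SDPlemma} and the comparison with the Uhlmann fidelity in \Cref{fidbound}.

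For the lower bound, I would observe that $W = 0$ is a feasible point in the SDP \eqref{eq:FGSDP}: the block matrix $\begin{pmatrix} \rho & 0 \\ 0 & \sigma \end{pmatrix}$ is positive semidefinite because $\rho$ and $\sigma$ individually are. Since $\FGM(\rho,\sigma)$ is the supremum of $\Tr(W)$ over feasible $W$, we immediately get $\FGM(\rho,\sigma) \geq \Tr(0) = 0$. (Alternatively, one could note that $\rho \# \sigma$ is positive semidefinite, as stated in \Cref{basicprops} and preserved under the limit in \Cref{eq:limit2}, so its trace is non-negative.)

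For the upper bound, the cleanest route is to invoke \Cref{fidbound}, which gives $\FGM(\rho,\sigma) \leq \F(\rho,\sigma)$, and then use the standard fact that the Uhlmann fidelity is bounded above by $1$ on quantum states. The latter follows quickly from the trace-norm definition: $\F(\rho,\sigma) = \|\rho^{1/2} \sigma^{1/2}\|_1 \leq \|\rho^{1/2}\|_2 \, \|\sigma^{1/2}\|_2 = \sqrt{\Tr(\rho)\Tr(\sigma)} = 1$ by Cauchy--Schwarz for the Hilbert--Schmidt inner product, together with the unit-trace condition on $\rho$ and $\sigma$.

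There is essentially no obstacle here; the proof is a two-line consequence of results already established in the paper. If I wanted to avoid referencing the Uhlmann bound explicitly, I could instead prove the upper bound directly from the SDP: any feasible $W$ is Hermitian, and block-positivity implies the Cauchy--Schwarz-type inequality $|\langle u, Wv\rangle|^2 \leq \langle u, \rho u\rangle \langle v, \sigma v\rangle$, from which $\Tr(W) \leq \sqrt{\Tr(\rho)\Tr(\sigma)} = 1$ follows by summing over a suitable basis. Either route is short, so I would choose the one appealing to \Cref{fidbound} for brevity.
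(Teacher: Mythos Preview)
Your proposal is correct and matches the paper's proof essentially line for line: the lower bound via $W=0$ being feasible in the SDP~\eqref{eq:FGSDP}, and the upper bound via \Cref{fidbound} together with $\F(\rho,\sigma)\leq 1$. The only difference is that you spell out a Cauchy--Schwarz justification for $\F\leq 1$, whereas the paper simply cites it as a known property.
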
 

\begin{proof}
Firstly, since $W = 0$ is always a feasible solution to the SDP \eqref{eq:FGSDP}, we have that ${\FGM(\rho, \sigma) \geq 0}$ for all states $\rho$ and $\sigma$. 
Since $\FGM(\rho ,\sigma ) \leq \F(\rho ,\sigma )$ and we have $\F(\rho ,\sigma ) \leq 1$, for all quantum states, the result holds. 
\end{proof}

\begin{lemma}[Unity condition] 
For all quantum states $\rho $ and $\sigma $, we have 
\begin{equation}
	\FGM(\rho ,\sigma ) = 1 
\, \text{ if and only if } \, 
\rho  = \sigma . \label{item:one} 
\end{equation} 
\end{lemma}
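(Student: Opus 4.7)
The statement has two directions. I will argue them separately, with the forward (``if'') direction being straightforward and the reverse direction relying on material already established in the paper.

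For the ``if'' direction, suppose $\rho = \sigma$. For invertible $\rho$, one computes directly from the definition
\[
\rho \# \rho = \rho^{1/2}(\rho^{-1/2} \rho \, \rho^{-1/2})^{1/2} \rho^{1/2} = \rho^{1/2} \Id \, \rho^{1/2} = \rho,
\]
and for general $\rho$ this extends to $\rho \# \rho = \rho$ by taking the limit in \Cref{eq:limit2}. Therefore $\FGM(\rho, \rho) = \Tr(\rho) = 1$.

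For the ``only if'' direction, my plan is to leverage \Cref{fidbound}, which states $\FGM(\rho, \sigma) \leq \F(\rho, \sigma)$, together with the standard (and well-known) fact that the \standard fidelity satisfies $\F(\rho, \sigma) \leq 1$ with equality precisely when $\rho = \sigma$. If $\FGM(\rho, \sigma) = 1$, then the chain $1 = \FGM(\rho, \sigma) \leq \F(\rho, \sigma) \leq 1$ forces $\F(\rho, \sigma) = 1$, and hence $\rho = \sigma$. This completely bypasses the need to grapple with non-invertibility directly.

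An alternative, more self-contained route would be to work purely within the SDP formulation of \Cref{SDPlemma}. Letting $W = \rho \# \sigma$ be the optimizer, feeding a vector of the form $(x, -x)$ into the block PSD condition yields $\rho + \sigma \succeq 2W$; taking traces and using $\Tr(W) = 1$ forces equality, so $W = (\rho + \sigma)/2$. For invertible $\rho$, the Schur complement of the block matrix then reduces, after setting $T := \rho^{-1/2} \sigma \rho^{-1/2}$, to $\rho^{1/2}(T - \Id)^2 \rho^{1/2} \preceq 0$, which forces $T = \Id$ and hence $\sigma = \rho$. The anticipated main obstacle with this alternative route is extending to the singular case (which would require a limiting argument on the SDP feasibility), which is precisely why the short proof via \Cref{fidbound} is preferable. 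I will use the short proof in the writeup.
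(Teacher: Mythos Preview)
Your proof is correct and follows essentially the same approach as the paper. The ``only if'' direction is identical (chain through \Cref{fidbound} and the known unity condition for $\F$); for the ``if'' direction you compute $\rho\#\rho=\rho$ directly from the definition and the limit, whereas the paper instead exhibits $W=\rho$ as a feasible point of the SDP~\eqref{eq:FGSDP} to get $\FGM(\rho,\rho)\geq 1$ and then invokes \Cref{lem:bounds} --- a cosmetic difference only.
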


\begin{proof} 
	If $\FGM(\rho ,\sigma ) = 1$, then we have $\F(\rho ,\sigma ) = 1$ which implies $\rho  = \sigma $. 
	Conversely, if $\rho  = \sigma $, then $W = \rho $ is a feasible solution to the SDP 
	
\begin{equation} 
	\FGM (\rho, \rho) = \max \left\{ \Tr(W) : 
\left[ \begin{array}{cc} 
\rho & W \\ 
W & \rho  
\end{array} \right] \succcurlyeq 0   
\right\} 
\end{equation} 
	certifying $\FGM(\rho , \rho ) \geq 1$.
The result now holds by Lemma~\ref{lem:bounds}.  
\end{proof}

\begin{lemma}[Additivity] \label{lemm:add}
	For all quantum states $\rho_1$, $\rho_2$, $\sigma_1$, $\sigma_2$ (of compatible dimensionalities), and scalars $\lambda_1, \lambda_2 \in (0,1)$ satisfying $\lambda_1 + \lambda_2 = 1$, we have 
\begin{equation} 
	\FGM( \lambda_1 \rho_1 \oplus  \lambda_2 \rho _2, \lambda_1 \sigma_1 \oplus \lambda_2 \sigma  _2) = \lambda_1 \FGM(\rho _1, \sigma _1) +  \lambda_2 \FGM(\rho  _2, \sigma _2). 
\end{equation} 
\end{lemma}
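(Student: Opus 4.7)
The plan is to use the SDP formulation from Lemma~\ref{SDPlemma} and establish both inequalities. A useful preliminary is the homogeneity property $\FGM(\lambda \rho, \lambda \sigma) = \lambda \FGM(\rho, \sigma)$ for $\lambda > 0$, which follows immediately from the SDP since $W$ is feasible for $(\rho, \sigma)$ if and only if $\lambda W$ is feasible for $(\lambda \rho, \lambda \sigma)$.

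For the lower bound, I would take $W_i := \rho_i \# \sigma_i$ (the optimizer of the SDP for $(\rho_i, \sigma_i)$) and use $W := \lambda_1 W_1 \oplus \lambda_2 W_2$ as a candidate feasible point for the composite SDP. After a simultaneous permutation of rows and columns (which preserves positive semidefiniteness), the $2 \times 2$ block feasibility matrix associated with this choice rearranges into the direct sum
\[
\lambda_1 \left[\begin{array}{cc} \rho_1 & W_1 \\ W_1 & \sigma_1 \end{array}\right] \;\oplus\; \lambda_2 \left[\begin{array}{cc} \rho_2 & W_2 \\ W_2 & \sigma_2 \end{array}\right],
\]
which is PSD as each summand is. Hence $W$ is feasible, and $\Tr(W) = \lambda_1 \FGM(\rho_1, \sigma_1) + \lambda_2 \FGM(\rho_2, \sigma_2)$ serves as a lower bound.

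For the upper bound, I would take an arbitrary feasible Hermitian $W$ for the composite SDP and write it in $2 \times 2$ block form compatible with the direct-sum structure, with diagonal blocks $W_{11}, W_{22}$ and off-diagonal block $W_{12}$ (with $W_{12}^\dagger$ across the diagonal). Selecting the rows and columns of the large feasibility matrix that correspond to the $\lambda_1 \rho_1$ summand and the $\lambda_1 \sigma_1$ summand extracts a principal submatrix
\[
\left[\begin{array}{cc} \lambda_1 \rho_1 & W_{11} \\ W_{11} & \lambda_1 \sigma_1 \end{array}\right] \succcurlyeq 0
\]
(since principal submatrices of PSD matrices are PSD, and $W_{11}$ is Hermitian as a diagonal block of a Hermitian matrix), so $W_{11}$ is feasible for the SDP for $(\lambda_1 \rho_1, \lambda_1 \sigma_1)$; the analogous statement holds for $W_{22}$. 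By Lemma~\ref{SDPlemma} and homogeneity, $\Tr(W_{ii}) \leq \lambda_i \FGM(\rho_i, \sigma_i)$, and summing $\Tr(W) = \Tr(W_{11}) + \Tr(W_{22})$ yields the matching upper bound.

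The main subtlety is the index bookkeeping for the principal-submatrix extraction: one must carefully pick rows and columns to isolate the correct blocks of both $\rho$ and $\sigma$ simultaneously with the corresponding $W_{ii}$. No limits or invertibility assumptions appear, consistent with the benefits of the SDP approach remarked upon earlier.
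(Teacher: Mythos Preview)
Your proposal is correct and follows essentially the same approach as the paper: both directions use the SDP formulation, with the lower bound coming from the block-diagonal candidate $\lambda_1 W_1 \oplus \lambda_2 W_2$ and the upper bound from extracting the diagonal principal submatrices of an optimal (or arbitrary feasible) $W$. Your explicit isolation of the homogeneity property and your remark about the permutation needed to see the direct-sum structure are details the paper leaves implicit, but the argument is otherwise the same.
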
  

\begin{proof}  
Let $W_1$ be an optimal solution to the SDP 
\begin{equation} 
	\FGM (\rho_1, \sigma_1) = \max \left\{ \Tr(W) : 
\left[ \begin{array}{cc} 
\rho_1 & W \\ 
W & \sigma_1  
\end{array} \right] \succcurlyeq 0   
\right\} \label{eq:FGSDP11} 
\end{equation} 
and let $W_2$ be an optimal solution to the SDP 
\begin{equation} 
	\FGM (\rho_2, \sigma_2) = \max \left\{ \Tr(W) : 
\left[ \begin{array}{cc} 
\rho_2 & W \\ 
W & \sigma_2  
\end{array} \right] \succcurlyeq 0   
\right\}. \label{eq:FGSDP12} 
\end{equation} 
It is straightforward to see that $\lambda_1 W_1 \oplus \lambda_2 W_2$ is feasible for the SDP 
\begin{equation} 
	\FGM (\lambda_1 \rho_1 \oplus \lambda_2 \rho_2, \lambda_1 \sigma_1 \oplus \lambda_2 \sigma_2) = \max \left\{ \Tr(W) : 
\left[ \begin{array}{cc} 
\lambda_1 \rho_1 \oplus \lambda_2 \rho_2 & W \\ 
W & \lambda_1 \sigma_1 \oplus \lambda_2 \sigma_2 
\end{array} \right] \succcurlyeq 0   
\right\}. \label{eq:FGSDP13} 
\end{equation} 
Thus, 
\begin{equation} 
\FGM (\lambda_1 \rho_1 \oplus \lambda_2 \rho_2, \lambda_1 \sigma_1 \oplus \lambda_2 \sigma_2) \geq \tr(\lambda_1 W_1 \oplus \lambda_2 W_2) = \lambda_1 \FGM(\rho _1, \sigma _1) +  \lambda_2 \FGM(\rho  _2, \sigma _2). 
\end{equation} 

Conversely, suppose 
\begin{equation}
W = \left[ \begin{array}{cc} 
W_{11} & W_{12} \\ 
W_{21} & W_{22} 
\end{array} \right] 
\end{equation} 
is an optimal solution to the SDP~(\ref{eq:FGSDP13}) (with the partitioning being clear from the context below). 
Since $W$ satisfies 
\begin{equation} 
\left[ \begin{array}{cc} 
\lambda_1 \rho_1 \oplus \lambda_2 \rho_2 & W \\ 
W & \lambda_1 \sigma_1 \oplus \lambda_2 \sigma_2  
\end{array} \right] \succcurlyeq 0,
\end{equation} 
by looking at symmetric submatrices, one can check that 
\begin{equation}
\left[ \begin{array}{cc} 
\lambda_1 \rho_1 & W_{11} \\ 
W_{11} & \lambda_1 \sigma_1
\end{array} \right] \succeq 0 
\quad \text{ and } \quad 
\left[ \begin{array}{cc} 
\lambda_2 \rho_2 & W_{22} \\ 
W_{22} & \lambda_2 \sigma_2
\end{array} \right] \succeq 0.   
\end{equation} 
Therefore, we have that $\frac{1}{\lambda_1} W_{11}$ is feasible for the SDP~(\ref{eq:FGSDP11}) and $\frac{1}{\lambda_2} W_{22}$ is feasible for the SDP~(\ref{eq:FGSDP12}). 
Thus, 
\begin{equation} 
\FGM(\rho _1, \sigma _1) \geq \frac{1}{\lambda_1} \tr(W_{11}) 
\quad \text{ and } \quad 
\FGM(\rho _2, \sigma _2)  \geq \frac{1}{\lambda_2} \tr(W_{22}) 
\end{equation} 
implying 
\begin{equation}
\FGM (\lambda_1 \rho_1 \oplus \lambda_2 \rho_2, \lambda_1 \sigma_1 \oplus \lambda_2 \sigma_2) = \tr(W_{11}) + \tr(W_{22}) 
\leq \lambda_1 \FGM(\rho _1, \sigma _1) +  \lambda_2 \FGM(\rho  _2, \sigma _2)  
\end{equation} 
as desired. 
\end{proof} 

\begin{lemma}[Multiplicativity] \label{lemm:mult}
	For all quantum states $\rho _1, \rho _2, \sigma _1, \sigma _2$ (of compatible dimensionalities), we have 
\begin{equation} 
\FGM(\rho _1 \otimes \rho _2, \sigma _1 \otimes \sigma _2) = \FGM(\rho _1, \sigma _1) \cdot \FGM(\rho  _2, \sigma _2). 
\end{equation} 
\end{lemma}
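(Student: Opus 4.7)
The plan is to prove multiplicativity in a single chain of equalities by combining the SDP identification from Lemma~\ref{SDPlemma} (which gives $\FGM(\rho, \sigma) = \tr(\rho \# \sigma)$) with the distributivity of the matrix geometric mean over tensor products (Property 8 of Fact~\ref{basicprops}). Explicitly, I would write
\begin{align*}
\FGM(\rho_1 \otimes \rho_2, \sigma_1 \otimes \sigma_2)
&= \tr\bigl((\rho_1 \otimes \rho_2) \# (\sigma_1 \otimes \sigma_2)\bigr) \\
&= \tr\bigl((\rho_1 \# \sigma_1) \otimes (\rho_2 \# \sigma_2)\bigr) \\
&= \tr(\rho_1 \# \sigma_1) \cdot \tr(\rho_2 \# \sigma_2) \\
&= \FGM(\rho_1, \sigma_1) \cdot \FGM(\rho_2, \sigma_2),
\end{align*}
invoking Lemma~\ref{SDPlemma} for the first and last steps, Property 8 of Fact~\ref{basicprops} for the second, and the standard identity $\tr(A \otimes B) = \tr(A)\tr(B)$ for the third.

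In keeping with the SDP spirit of the rest of this section (cf. the additivity proof in Lemma~\ref{lemm:add}), the $\geq$ direction also admits a direct SDP derivation: take optimal $W_i$ for the SDP of $\FGM(\rho_i, \sigma_i)$, so that each
$$M_i := \left[\begin{array}{cc} \rho_i & W_i \\ W_i & \sigma_i \end{array}\right] \succcurlyeq 0.$$
Then $M_1 \otimes M_2 \succcurlyeq 0$, and after a permutation of basis vectors the desired feasibility matrix
$$\left[\begin{array}{cc} \rho_1 \otimes \rho_2 & W_1 \otimes W_2 \\ W_1 \otimes W_2 & \sigma_1 \otimes \sigma_2 \end{array}\right]$$
appears as the principal submatrix of $M_1 \otimes M_2$ obtained by restricting to the $(0,0)$ and $(1,1)$ block indices. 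Principal submatrices of PSD matrices are PSD, so $W_1 \otimes W_2$ is feasible for the joint SDP; combined with $\tr(W_1 \otimes W_2) = \tr(W_1)\tr(W_2)$ this already yields the $\geq$ direction.

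The main obstacle is that Property 8 of Fact~\ref{basicprops} is stated only for positive \emph{definite} matrices, whereas quantum states may be singular. I would handle this by continuity: apply Property 8 to $\rho_i + \epsilon \Id$ and $\sigma_i + \epsilon \Id$ (all invertible), then let $\epsilon \to 0$ using the limit definition in \Cref{eq:limit2}. What must be verified is joint continuity of $\#$ on pairs of PSD matrices beyond the specific diagonal sequence $X + \epsilon \Id$ (because on the left-hand side we approach $\rho_1 \otimes \rho_2$ along the perturbation $(\rho_1 + \epsilon \Id) \otimes (\rho_2 + \epsilon \Id)$, not along $\rho_1 \otimes \rho_2 + \epsilon \Id$). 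This joint continuity follows from Lemma~\ref{SDPlemma}: the feasible set of the SDP varies continuously in $(\rho, \sigma)$, and the optimizer $\rho \# \sigma$ therefore depends continuously on the inputs.
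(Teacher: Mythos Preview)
Your proposal is essentially correct and closely parallels the paper's proof. The $\geq$ direction via the SDP (tensoring optimal $W_i$'s and extracting the relevant principal submatrix of $M_1 \otimes M_2$) is exactly the paper's argument. For the other direction, both you and the paper ultimately apply Property~8 of Fact~\ref{basicprops} to the perturbed (invertible) states and then pass to a limit.

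The one substantive difference is how the mismatch between perturbation schemes is resolved. You invoke general joint continuity of $\#$ on PSD pairs, justified via continuity of the SDP optimizer. The paper instead proves only the inequality $\leq$ at this step: it observes that $(\rho_i)_\epsilon \otimes (\rho_j)_\epsilon \succeq (\rho_i \otimes \rho_j)_{\epsilon^2}$, applies the \emph{monotonicity} of $\#$ (Property~6 of Fact~\ref{basicprops}) to obtain
\[
(\rho_1 \otimes \rho_2)_{\epsilon^2} \,\#\, (\sigma_1 \otimes \sigma_2)_{\epsilon^2}
\;\preceq\;
\bigl((\rho_1)_\epsilon \otimes (\rho_2)_\epsilon\bigr) \,\#\, \bigl((\sigma_1)_\epsilon \otimes (\sigma_2)_\epsilon\bigr),
\]
then uses Property~8 on the right-hand side and takes the limit. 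This yields $\FGM(\rho_1 \otimes \rho_2, \sigma_1 \otimes \sigma_2) \leq \FGM(\rho_1, \sigma_1)\,\FGM(\rho_2, \sigma_2)$, which combined with the SDP-based $\geq$ gives equality. The paper's route is more self-contained: your continuity justification (``the feasible set varies continuously, hence so does the optimizer'') is ultimately valid here because the optimizer is unique by Lemma~\ref{handylemma} and the feasible set is compact, but that is a nontrivial Berge-type appeal, whereas the monotonicity trick uses only properties already recorded in Fact~\ref{basicprops}.
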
 
  
Even though the matrix geometric mean behaves nicely over Kronecker products of invertible quantum states, it gets a little tricky with non-invertible states. 
For instance, in general we have 
\begin{equation} 
(\rho \otimes \sigma) + \epsilon \Id \neq (\rho + \epsilon \Id) \otimes (\sigma + \epsilon \Id). 
\end{equation}  
  
\begin{proof}[Proof of Lemma~\ref{lemm:mult}] 
Let $W_1$ be an optimal solution to the SDP 
\begin{equation} 
	\FGM (\rho_1, \sigma_1) = \max \left\{ \Tr(W) : 
\left[ \begin{array}{cc} 
\rho_1 & W \\ 
W & \sigma_1  
\end{array} \right] \succcurlyeq 0   
\right\} \label{eq:FGSDP1} 
\end{equation} 
and $W_2$ be an optimal solution to the SDP 
\begin{equation} 
	\FGM (\rho_2, \sigma_2) = \max \left\{ \Tr(W) : 
\left[ \begin{array}{cc} 
\rho_2 & W \\ 
W & \sigma_2  
\end{array} \right] \succcurlyeq 0   
\right\}. \label{eq:FGSDP2} 
\end{equation} 
We see that
\begin{equation} \label{bigmatrix}
\left[ \begin{array}{cc} 
\rho_1 & W_1 \\ 
W_1 & \sigma_1  
\end{array} \right] \otimes \left[ \begin{array}{cc} 
\rho_2 & W_2 \\ 
W_2 & \sigma_2  
\end{array} \right] \succeq 0  
\end{equation}  
since each individual matrix is positive semidefinite. 
Note that
\begin{equation} 
\left[ \begin{array}{cc} 
\rho_1 \otimes \rho_2 & W_1 \otimes W_2 \\ 
W_1 \otimes W_2 & \sigma_1 \otimes \sigma_2  
\end{array} \right] 
\end{equation} 
is a symmetric submatrix of the positive semidefinite matrix in \Cref{bigmatrix}, and is thus positive semidefinite as well. 
Therefore, $W = W_1 \otimes W_2$ 
is feasible in the SDP 
\begin{equation} 
	\FGM (\rho_1 \otimes \rho_2, \sigma_1 \otimes \sigma_2) = \max \left\{ \Tr(W) : 
\left[ \begin{array}{cc} 
\rho_1 \otimes \rho_2 & W \\ 
W & \sigma_1 \otimes \sigma_2  
\end{array} \right] \succcurlyeq 0   
\right\}. \label{eq:FGSDP3} 
\end{equation} 
Thus, 
\begin{equation} 
\FGM (\rho_1 \otimes \rho_2, \sigma_1 \otimes \sigma_2) \geq \tr(W) = \tr(W_1) \cdot \tr(W_2) = \FGM(\rho_1, \sigma_1) \cdot \FGM(\rho_2, \sigma_2). 
\end{equation} 
  
For the reverse inequality, 
we can exploit some of the previously discussed properties of the matrix geometric mean. 
For instance, for any positive semidefinite matrices $A$ and $B$, we have 
\begin{equation} \label{epsquared}
A_{\epsilon} \otimes B_{\epsilon}  \succeq (A \otimes B)_{\epsilon^2} \succ 0 
\end{equation}
recalling \Cref{shorthand} and noting the $\epsilon^2$ on the right-hand side.  
Therefore, 
\begin{align} 
\FGM(\rho_1 \otimes \rho_2, \sigma_1 \otimes \sigma_2) 
& = \tr \left( \lim_{\epsilon \to 0} (\rho_1 \otimes \rho_2)_{\epsilon^2} \# (\sigma_1 \otimes \sigma_2)_{\epsilon^2} \right) \\ 
& = \lim_{\epsilon \to 0} \tr( (\rho_1 \otimes \rho_2)_{\epsilon^2} \# (\sigma_1 \otimes \sigma_2)_{\epsilon^2} ) \\ 
& \leq \lim_{\epsilon \to 0} \tr( ((\rho_1)_{\epsilon} \otimes (\rho_2)_{\epsilon}) \# ((\sigma_1)_{\epsilon} \otimes (\sigma_2)_{\epsilon})) \;\; \text{[using 6.~from Fact~\ref{basicprops} and (\ref{epsquared})]} \\
& = \lim_{\epsilon \to 0} \tr( ((\rho_1)_{\epsilon} \# (\sigma_1)_{\epsilon}) \otimes ((\rho_2)_{\epsilon} \# (\sigma_2)_{\epsilon})) \;\; \text{[using {8. from Fact~\ref{basicprops}}]} \\ 
& = \FGM(\rho_1, \sigma_1) \cdot \FGM(\rho_2, \sigma_2),   
\end{align} 
finishing the proof.
 
\end{proof}

\begin{lemma}[Unitary invariance] 
For all quantum states $\rho $ and $\sigma $ and any unitary $U$, we have 
\begin{equation} 
	\FGM(U \rho U^\dagger , U \sigma  U^\dagger ) = \FGM(\rho ,\sigma ).  \label{item:unitary} 
\end{equation}  
\end{lemma}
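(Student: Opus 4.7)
The plan is to leverage the SDP formulation from Lemma~\ref{SDPlemma} and observe that conjugation by the block-diagonal unitary $U \oplus U$ sets up a bijection between the feasible regions of the two SDPs while preserving the trace objective.

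More concretely, suppose $W$ is feasible for $\FGM(\rho,\sigma)$, so that
\begin{equation}
\left[ \begin{array}{cc} \rho & W \\ W & \sigma \end{array} \right] \succeq 0.
\end{equation}
Since conjugation by any matrix preserves positive semidefiniteness, we can conjugate by $\diag(U,U)$ to obtain
\begin{equation}
\left[ \begin{array}{cc} U & 0 \\ 0 & U \end{array} \right] \left[ \begin{array}{cc} \rho & W \\ W & \sigma \end{array} \right] \left[ \begin{array}{cc} U^\dagger & 0 \\ 0 & U^\dagger \end{array} \right] = \left[ \begin{array}{cc} U\rho U^\dagger & UWU^\dagger \\ UWU^\dagger & U\sigma U^\dagger \end{array} \right] \succeq 0,
\end{equation}
so that $UWU^\dagger$ is feasible for the SDP defining $\FGM(U\rho U^\dagger, U\sigma U^\dagger)$. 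By the cyclic property of the trace, $\Tr(UWU^\dagger) = \Tr(W)$, and taking $W$ to be the optimizer given by Lemma~\ref{SDPlemma} yields $\FGM(U\rho U^\dagger, U\sigma U^\dagger) \geq \FGM(\rho,\sigma)$.

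The reverse inequality follows by the identical argument applied to the states $U\rho U^\dagger$ and $U\sigma U^\dagger$ together with the unitary $U^\dagger$, which returns us to $\rho$ and $\sigma$. Combining the two inequalities gives the desired equality. There is no real obstacle here: the SDP characterization is manifestly compatible with unitary conjugation at the level of both the constraints (via $\diag(U,U)$) and the objective (via cyclicity of the trace), which is exactly why formulating the fidelity as an SDP makes this property essentially immediate.
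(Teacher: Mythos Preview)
Your proof is correct and follows essentially the same approach as the paper: both use the SDP formulation from Lemma~\ref{SDPlemma}, conjugate by the block-diagonal unitary $\diag(U,U)$ to transport feasible solutions while preserving the trace, and then obtain the reverse inequality by applying the same argument with $U^\dagger$.
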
 
  
\begin{proof} 
Let $W$ be an optimal solution to the SDP 
\begin{equation} 
\FGM(\rho ,\sigma ) = \max \left\{ \Tr(W) :   
\left[ \begin{array}{cc} 
\rho & W \\ 
W & \sigma  
\end{array} \right] \succcurlyeq 0 
\right\}. 
\end{equation} 
For a fixed unitary $U$, we have\footnote{This follows because $M\succcurlyeq 0 \iff XMX^{\dagger} \succcurlyeq 0$ for invertible $X$, where in this case $X = \begin{pmatrix} U & 0 \\ 0 & U \end{pmatrix}$.}
\begin{equation} 
\left[ \begin{array}{cc} 
\rho & W \\ 
W & \sigma  
\end{array} \right] \succcurlyeq 0 
\, \text{ if and only if } \, 
\left[ \begin{array}{cc} 
U \rho U^\dagger  & U W U^\dagger  \\ 
U W U^\dagger  & U \sigma  U^\dagger 
\end{array} \right] \succcurlyeq 0.  
\end{equation} 
Thus $W' = UWU^\dagger $ is feasible for the SDP 
\begin{equation} 
\FGM(U \rho U^\dagger, U \sigma U^\dagger) = \max \left\{ \Tr(W') :   
\left[ \begin{array}{cc} 
U \rho U^\dagger  & W'  \\ 
W'  & U \sigma  U^\dagger 
\end{array} \right] \succcurlyeq 0 
\right\}  
\end{equation}  
implying 
\begin{equation} \label{unitary}
	\FGM(U \rho U^\dagger , U \sigma  U^\dagger ) \geq 
	\Tr(W') 
	= \Tr(W)
	= \FGM(\rho ,\sigma ).  
\end{equation} 
For the inverse unitary $V := U^\dagger$  
we have from \eqref{unitary} that  
\begin{equation} 
	\FGM(\rho ,\sigma ) 
= 
\FGM(VU \rho U^\dagger V^\dagger , VU \sigma  U^\dagger V^\dagger ) 
\geq 
\FGM(U \rho U^\dagger , U \sigma  U^\dagger )
\geq 
\FGM(\rho ,\sigma ) 
\end{equation} 
concluding the proof. 
\end{proof}
  
\begin{lemma}[Monotonicity under PTP maps] 
For any quantum states $\rho $ and $\sigma $ and PTP (positive, trace-preserving) map $\Phi$, we have 
\begin{equation}
	\FGM(\Phi(\rho ), \Phi(\sigma )) \geq \FGM(\rho ,\sigma ). \label{item:dpi} 
\end{equation} 
\end{lemma}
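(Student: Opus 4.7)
The plan is to combine Ando's inequality (property~9 of Fact~\ref{basicprops}) with the trace-preservation of $\Phi$. By definition $\FGM(\rho,\sigma) = \tr(\rho \# \sigma)$ and $\FGM(\Phi(\rho),\Phi(\sigma)) = \tr(\Phi(\rho) \# \Phi(\sigma))$, so the desired inequality reduces to a trace inequality between two positive semidefinite matrices which is tailor-made for Ando's result.

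The key step is to apply Ando's inequality to the positive map $\Phi$ and the positive semidefinite matrices $A = \rho$ and $B = \sigma$, yielding
\begin{equation}
\Phi(\rho) \# \Phi(\sigma) \succcurlyeq \Phi(\rho \# \sigma).
\end{equation}
Both sides are Hermitian and positive semidefinite (the left-hand side by property~2 of Fact~\ref{basicprops}; the right-hand side since $\Phi$ is positive and $\rho \# \sigma \succcurlyeq 0$), so taking traces preserves the inequality. Using that $\Phi$ is trace-preserving we have $\tr(\Phi(\rho \# \sigma)) = \tr(\rho \# \sigma)$, giving the chain
\begin{equation}
\FGM(\Phi(\rho),\Phi(\sigma)) = \tr(\Phi(\rho) \# \Phi(\sigma)) \geq \tr(\Phi(\rho \# \sigma)) = \tr(\rho \# \sigma) = \FGM(\rho,\sigma),
\end{equation}
which closes the argument.

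The main obstacle is conceptual rather than computational. It is tempting to mimic the purely SDP-based proofs used earlier in the section: take an optimal $W = \rho \# \sigma$ for the SDP characterization of $\FGM(\rho,\sigma)$ and argue that $\Phi(W)$ is feasible for the SDP characterization of $\FGM(\Phi(\rho),\Phi(\sigma))$. This would require
\begin{equation}
\left[ \begin{array}{cc} \Phi(\rho) & \Phi(W) \\ \Phi(W) & \Phi(\sigma) \end{array} \right] \succcurlyeq 0
\end{equation}
as a consequence of the feasibility of $W$ in the original SDP, which is precisely the statement that $\Phi$ is $2$-positive. This holds automatically for completely positive maps but fails for general positive maps, so to secure the stronger PTP statement we genuinely need Ando's inequality; the SDP formulation of Lemma~\ref{SDPlemma} then enters only through the identification $\FGM(\rho,\sigma) = \tr(\rho \# \sigma)$ at the start and end of the chain above.
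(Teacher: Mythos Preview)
Your core idea --- Ando's inequality combined with trace preservation --- is exactly what the paper uses, but there is a technical gap. Property~9 of Fact~\ref{basicprops} is stated only for positive \emph{definite} $A$ and $B$, whereas the quantum states $\rho$ and $\sigma$ may be singular. You apply the inequality directly to $\rho$ and $\sigma$ without justifying this extension. The extension is not automatic: by definition $\rho\#\sigma=\lim_{\eps\to 0}\rho_\eps\#\sigma_\eps$ and $\Phi(\rho)\#\Phi(\sigma)=\lim_{\eps\to 0}\Phi(\rho)_\eps\#\Phi(\sigma)_\eps$, but in general $\Phi(\rho)_\eps\neq\Phi(\rho_\eps)$, so one cannot simply pass Ando's inequality for $\rho_\eps,\sigma_\eps$ through the limit.

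The paper's proof deals with precisely this obstacle. It applies Ando's result to the perturbed (hence positive definite) inputs $\rho_\eps,\sigma_\eps$, obtaining $\Phi(\rho_\eps)\#\Phi(\sigma_\eps)\succeq\Phi(\rho_\eps\#\sigma_\eps)$, and then bridges the mismatch between $\Phi(\rho_\eps)$ and $\Phi(\rho)_\eps$ via the bound $\Phi(\rho_\eps)\preceq\Phi(\rho)_{t\eps}$ with $t=\|\Phi(\Id)\|_\infty$, invoking property~6 of Fact~\ref{basicprops} (monotonicity of $\#$ in the L\"owner order) through Lemma~\ref{handylemma} to obtain $\Phi(\rho)_{t\eps}\#\Phi(\sigma)_{t\eps}\succeq\Phi(\rho_\eps)\#\Phi(\sigma_\eps)$. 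Only then does it take traces, use trace preservation, and send $\eps\to 0$. Your argument would be complete if you inserted this limiting step; as written, it proves the lemma only for invertible $\rho$ and $\sigma$.
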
 

\begin{proof}  
By \Cref{handylemma}, we know $W = \Phi(\rho_{\eps}) \# \Phi(\sigma_{\eps})$ satisfies   
\begin{equation} 
\left[ \begin{array}{cc} 
\Phi(\rho _{\eps}) & W \\ 
W & \Phi(\sigma _{\eps}) 
\end{array} \right] \succcurlyeq 0    
\end{equation} 
recalling the shorthand notation~\Cref{shorthand}. 
Since $\Phi$ is a linear operator, we have 
\begin{equation}  \label{ordering}
	\Phi(\rho _{\eps}) = \Phi(\rho  + \eps \Id) = \Phi(\rho ) + \eps \Phi(\Id) \preccurlyeq \Phi(\rho ) + \eps t \Id = \Phi(\rho )_{t \eps}  
\end{equation}  
where $t = \| \Phi(\Id) \|_\infty$. 
Similarly, $\Phi(\sigma _{\eps}) \preccurlyeq \Phi(\sigma )_{t \eps}$. 
Thus, $W = \Phi(\rho_{\eps}) \# \Phi(\sigma_{\eps})$ also satisfies 
\begin{equation} 
\left[ \begin{array}{cc} 
		\Phi(\rho )_{t \eps} & W \\ 
		W & \Phi(\sigma )_{t \eps}  
\end{array} \right] \succcurlyeq 0. 
\end{equation} 
By \Cref{handylemma} again, this implies that 
\begin{equation}
\Phi(\rho)_{t \eps} \# \Phi(\sigma)_{t \eps}
\succeq 
\Phi(\rho_{\eps}) \# \Phi(\sigma_{\eps}).  
\end{equation}  
 
By a result by Ando~\cite{Ando} (and mentioned previously in \Cref{basicprops}), 
we have that 
\begin{equation} 
\Phi(\rho _{\eps}) \# \Phi(\sigma _{\eps}) \succeq \Phi(\rho _{\eps} \# \sigma _{\eps}). 
\end{equation}
Combining the above two inequalities, we have  
\begin{equation} \label{2nd}
\Tr(\Phi(\rho)_{t \eps} \# \Phi(\sigma)_{t \eps})  
\geq 
\Tr(\Phi(\rho_{\eps}) \# \Phi(\sigma_{\eps}))   
\geq 
\Tr(\Phi(\rho _{\eps} \# \sigma _{\eps})) 
= 
\Tr(\rho _{\eps} \# \sigma _{\eps})   
\end{equation} 
since $\Phi$ is trace-preserving. 
Taking limits finishes the proof. 
\end{proof} 

Note that this is a property shared with the \standard fidelity as shown in \cite{dpi}. 
We stress here that PTP maps are more general than quantum channels as completely positivity is a stronger condition than positivity.   
     
\begin{lemma}[Joint concavity] 
For any quantum states $\rho _1, \ldots, \rho _n$ and $\sigma _1, \ldots, \sigma _n$ and probability distribution $\{p_i\}$, we have 
\begin{equation} 
\FGM \left( \sum_{i=1}^n p_i \rho  _i, \sum_{i=1}^n p_i \sigma  _i \right) 
\geq \sum_{i=1}^n p_i \FGM (\rho  _i, \sigma  _i). \label{item:concavity} 
\end{equation} 
\end{lemma}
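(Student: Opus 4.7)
The plan is to prove joint concavity directly from the SDP formulation of \Cref{SDPlemma}, which makes the argument a short convex-combination exercise rather than requiring us to manipulate matrix geometric means explicitly (and thereby avoids the usual limits needed for non-invertible states).

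First, for each $i \in \{1,\ldots,n\}$, I would invoke \Cref{SDPlemma} to obtain an optimal $W_i$ for the SDP characterizing $\FGM(\rho_i, \sigma_i)$, so that $\Tr(W_i) = \FGM(\rho_i, \sigma_i)$ and
\begin{equation}
\begin{bmatrix} \rho_i & W_i \\ W_i & \sigma_i \end{bmatrix} \succcurlyeq 0.
\end{equation}
Next, I would define the candidate $W := \sum_{i=1}^n p_i W_i$ and observe that
\begin{equation}
\begin{bmatrix} \sum_i p_i \rho_i & W \\ W & \sum_i p_i \sigma_i \end{bmatrix} = \sum_{i=1}^n p_i \begin{bmatrix} \rho_i & W_i \\ W_i & \sigma_i \end{bmatrix} \succcurlyeq 0,
\end{equation}
since a non-negative combination of positive semidefinite matrices remains positive semidefinite. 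Hence $W$ is feasible for the SDP defining $\FGM(\sum_i p_i \rho_i, \sum_i p_i \sigma_i)$.

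Finally, applying \Cref{SDPlemma} once more (in its ``$\geq$'' direction for any feasible point) yields
\begin{equation}
\FGM\!\left( \sum_{i=1}^n p_i \rho_i, \sum_{i=1}^n p_i \sigma_i \right) \geq \Tr(W) = \sum_{i=1}^n p_i \Tr(W_i) = \sum_{i=1}^n p_i \FGM(\rho_i, \sigma_i),
\end{equation}
which is exactly the claimed joint concavity.

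I do not anticipate a genuine obstacle: the SDP formulation essentially trivializes joint concavity because the feasible set of the SDP is convex in the pair $(\rho, \sigma)$ of input states (when one fixes $W$ as a convex combination of individual optimizers), and the objective $\Tr(W)$ is linear. The only minor care needed is to verify that the block decomposition of the convex combination really does split as a sum of PSD blocks, which is immediate from the linearity of block-matrix construction. No limits or perturbative $\eps$-regularizations are required.
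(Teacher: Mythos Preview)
Your proposal is correct and is essentially the same as the paper's own proof: both exhibit a feasible $W$ for the SDP of the convex combination by taking $W = \sum_i p_i W_i$, where each $W_i$ witnesses $\FGM(\rho_i,\sigma_i)$, and then use linearity of trace. The only cosmetic difference is that the paper names the optimal witnesses explicitly as $W_i = \rho_i \# \sigma_i$ (via \Cref{handylemma}), whereas you obtain them abstractly from \Cref{SDPlemma}; these coincide by that lemma's ``moreover'' clause.
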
 

\begin{proof}  
From \Cref{handylemma}, we have that  
\begin{equation} 
\left[ \begin{array}{cc} 
\rho _i & \rho _i \# \sigma _i \\ 
\rho _i \# \sigma _i & \sigma _i 
\end{array} \right] \succcurlyeq 0  
\end{equation} 
for all $i \in \{ 1, \ldots, n \}$. 
Using the fact that positive semidefinite matrices form a convex set, we have 
\begin{equation} 
\sum_{i=1}^n p_i \left[ \begin{array}{cc} 
\rho _i & \rho _i \# \sigma _i \\ 
\rho _i \# \sigma _i & \sigma _i 
\end{array} \right] 
= 
\left[ \begin{array}{cc} 
\sum_{i=1}^n p_i \rho _i & \sum_{i=1}^n p_i ( \rho _i \# \sigma _i ) \\ 
\sum_{i=1}^n p_i ( \rho _i \# \sigma _i ) & \sum_{i=1}^n p_i \sigma _i 
\end{array} \right] \succcurlyeq 0. 
\end{equation} 
This implies that 
$W := \sum_{i=1}^n p_i ( \rho _i \# \sigma _i )$ is feasible in the SDP 
\begin{equation}
\FGM \left( \sum_{i=1}^n p_i {\rho _i}, \sum_{i=1}^n p_i {\sigma _i} \right) 
= 
\max \left\{ \tr(W) : \left[ \begin{array}{cc} 
\sum_{i=1}^n p_i \rho _i & W \\ 
W & \sum_{i=1}^n p_i \sigma _i 
\end{array} \right] \succcurlyeq 0 \right\}.  
\end{equation}  
Therefore, 
\begin{equation}
\FGM \left( \sum_{i=1}^n p_i {\rho _i}, \sum_{i=1}^n p_i {\sigma _i} \right) \geq \Tr(W) = \sum_{i=1}^n p_i \Tr(\rho _i \# \sigma _i) = \sum_{i=1}^n p_i \FGM(\rho _i,  \sigma _i)\end{equation} 
as desired.   
\end{proof} 

\begin{lemma}[First Fuchs-van de Graaf inequality] 
For any quantum states $\rho $ and $\sigma $, we have 
\begin{equation} \label{item:fvg1}
	\FGM(\rho ,\sigma )^2 + \Delta (\rho ,\sigma )^2 \leq 1 
\end{equation} 
where $\Delta (\rho ,\sigma ) = \frac{1}{2} \|\rho  -\sigma  \|_1$ is the trace distance.  
\end{lemma}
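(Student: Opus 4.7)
The plan is to deduce this from the corresponding inequality for the \standard fidelity, which has already been established in the literature. The Fuchs--van de Graaf inequality for the \standard fidelity states that $\F(\rho, \sigma)^2 + \Delta(\rho, \sigma)^2 \leq 1$ for all quantum states $\rho$ and $\sigma$, and this is a standard result I would simply cite.

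The key observation is that Lemma~\ref{fidbound} already gives us $\FGM(\rho, \sigma) \leq \F(\rho, \sigma)$. Since both fidelities are nonnegative (by Lemma~\ref{lem:bounds} and the corresponding fact for $\F$), squaring preserves the inequality, so
\begin{equation}
\FGM(\rho, \sigma)^2 \leq \F(\rho, \sigma)^2.
\end{equation}
Adding $\Delta(\rho, \sigma)^2$ to both sides and then applying the \standard fidelity version of the Fuchs--van de Graaf inequality gives
\begin{equation}
\FGM(\rho, \sigma)^2 + \Delta(\rho, \sigma)^2 \leq \F(\rho, \sigma)^2 + \Delta(\rho, \sigma)^2 \leq 1,
\end{equation}
which is exactly the desired bound.

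There is essentially no obstacle here; the inequality is obtained essentially for free from the comparison $\FGM \leq \F$ combined with the well-known result for $\F$. The only thing worth noting is that this bound will in general be loose for non-commuting states, since $\FGM$ can be strictly smaller than $\F$ (dramatically so, as emphasized by \Cref{eq:puregm} for distinct pure states, where $\FGM$ vanishes while the Fuchs--van de Graaf bound for $\F$ is saturated only in the pure-state case). A tighter bound specifically tailored to $\FGM$ might be obtainable directly from the SDP characterization in Lemma~\ref{SDPlemma}, but that is not needed for this statement.
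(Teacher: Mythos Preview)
Your proof is correct and follows essentially the same approach as the paper's own proof: it invokes Lemma~\ref{fidbound} ($\FGM \leq \F$) together with the known first Fuchs--van de Graaf inequality for the \standard fidelity. The paper's version is just more terse, but the argument is identical.
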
 

\begin{proof}
Since the inequality holds for the \standard fidelity, the result follows from Lemma~\ref{fidbound}. 
\end{proof}

Now we prove the claim made in the introduction that both these fidelities are quantizations of the classical fidelity. 

\begin{lemma}[Classical limit] \label{classicallimit}
	For any quantum states $\rho $ and $\sigma $ \emph{that commute}, we have 
\begin{equation} \label{item:classical} 
\F(\rho ,\sigma ) =	\FGM(\rho ,\sigma ) = \FCL(\left\{ p_i \right\},\left\{ q_i \right\})
\end{equation} 
where $\left\{ p_i \right\}$ are the eigenvalues of $\rho $ and $\left\{ q_i \right\}$ are the eigenvalues of $\sigma $. 
\end{lemma}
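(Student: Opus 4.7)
The plan is to reduce to the diagonal case via unitary invariance, and then use the commuting property of the matrix geometric mean (item 3 of Fact~\ref{basicprops}). Since $\rho$ and $\sigma$ are commuting Hermitian matrices, they are simultaneously diagonalizable: there exists a unitary $U$ such that $\rho = U D_\rho U^\dagger$ and $\sigma = U D_\sigma U^\dagger$, where $D_\rho = \diag\{p_i\}$ and $D_\sigma = \diag\{q_i\}$ with the eigenvalues labelled according to the shared eigenbasis. By unitary invariance, it suffices to prove the claim for the diagonal states $D_\rho$ and $D_\sigma$.

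For the Uhlmann fidelity, the formula $\F(\rho,\sigma) = \tr((\rho^{1/2}\sigma\rho^{1/2})^{1/2})$ applied to diagonal matrices gives directly $\sum_i \sqrt{p_i q_i} = \FCL(\{p_i\},\{q_i\})$, since all matrix products and square roots reduce to entrywise operations on the diagonal.

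For the \newfidelity, I would first handle the invertible case. If $D_\rho$ and $D_\sigma$ are both positive definite (all $p_i, q_i > 0$), then item 3 of Fact~\ref{basicprops} yields $D_\rho \# D_\sigma = D_\rho^{1/2} D_\sigma^{1/2} = \diag\{\sqrt{p_i q_i}\}$, so $\FGM(D_\rho, D_\sigma) = \sum_i \sqrt{p_i q_i}$. For the general (possibly singular) case, the diagonal regularizations $(D_\rho)_\eps$ and $(D_\sigma)_\eps$ remain diagonal, hence still commute, so the invertible case applies to give
\begin{equation}
    (D_\rho)_\eps \# (D_\sigma)_\eps = \diag\left\{\sqrt{(p_i+\eps)(q_i+\eps)}\right\}.
\end{equation}
Taking the trace and letting $\eps \to 0$ (using that the limit defining $\#$ exists, as noted after its definition) gives $\FGM(D_\rho, D_\sigma) = \sum_i \sqrt{p_i q_i}$.

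There is no real obstacle here; the only mild subtlety is keeping track of the indexing so that the eigenvalues $\{p_i\}$ and $\{q_i\}$ are aligned with the shared eigenbasis, but this is exactly how they are labelled in the statement. The proof then consists of combining unitary invariance (already established), the commuting-case identity for $\#$, and a straightforward continuity argument to cover the singular case.
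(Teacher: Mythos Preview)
Your argument is correct, but it proceeds along a different line from the paper's own proof. You diagonalize, compute each fidelity entrywise, and handle the singular case by passing to the limit $\eps\to 0$. The paper instead argues directly via the SDP characterizations: since $\rho$ and $\sigma$ commute, $\rho^{1/2}\sigma^{1/2}$ is positive semidefinite, so $\F(\rho,\sigma)=\|\rho^{1/2}\sigma^{1/2}\|_1=\Tr(\rho^{1/2}\sigma^{1/2})$, meaning $X=\rho^{1/2}\sigma^{1/2}$ is optimal in the Uhlmann SDP~\eqref{eq:FSDP}; but this $X$ is Hermitian, hence also feasible (and therefore optimal) in the Matsumoto SDP~\eqref{eq:FGSDP}, giving $\F=\FGM$ immediately. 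The paper's route is shorter and stays within its theme of using the SDP formulation to bypass limits for non-invertible states; your route is more elementary and self-contained, relying only on unitary invariance and the commuting-case identity for $\#$, at the cost of reintroducing the $\eps$-regularization that the SDP approach was designed to avoid.
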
 

\begin{proof} 
If $\rho$ and $\sigma$ commute, then we have that $\rho^{1/2} \sigma^{1/2}$ is positive semidefinite. 
Therefore, we have 
\begin{equation} \label{useful}
\F(\rho, \sigma) = \| \rho^{1/2} \sigma^{1/2} \|_1 = \Tr(\rho^{1/2} \sigma^{1/2}). 
\end{equation}
Thus, $X = \rho^{1/2} \sigma^{1/2}$ is an optimal solution to the SDP~(\ref{eq:FSDP}).
Since $X$ is also Hermitian, it is also an optimal solution to the SDP~(\ref{eq:FGSDP}), and thus $\F(\rho, \sigma) = \FGM(\rho, \sigma)$. 
Checking that they both equal $\FCL(\left\{ p_i \right\},\left\{ q_i \right\})$ follows by a simple calculation which can be seen from  \Cref{useful}.  
\end{proof} 
      
      
\subsection{Properties not shared with the \standard fidelity}

We now discuss properties satisfied by the \newfidelity but \emph{not} satisfied by the 
\standard fidelity. 
The following are well-known properties of the \standard fidelity function and we refer the reader to the book \cite{N&C} for further details. 

\begin{fact}[Orthogonality] 
For all quantum states $\rho $ and $\sigma $, the \standard fidelity satisfies  
\begin{equation} \label{item:zero} 
	\F(\rho ,\sigma ) = 0 
\; \text{ if and only if } \; 
\Tr(\rho  \sigma ) = 0,  
\end{equation} 
i.e., if and only if $\rho $ and $\sigma $ are orthogonal with respect to the Hilbert-Schmidt inner product. 
\end{fact}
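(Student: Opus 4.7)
The plan is to prove both directions by manipulating the definition $\F(\rho,\sigma) = \Tr\bigl((\rho^{1/2}\sigma\rho^{1/2})^{1/2}\bigr) = \|\rho^{1/2}\sigma^{1/2}\|_1$ directly, using two elementary facts: (i) the trace of a positive semidefinite matrix vanishes if and only if the matrix itself vanishes, and (ii) the Hilbert–Schmidt norm squared is $\|M\|_2^2 = \Tr(M^\dagger M)$.

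For the forward direction, suppose $\F(\rho,\sigma)=0$. Since $\rho^{1/2}\sigma\rho^{1/2}\succcurlyeq 0$, its positive square root is also positive semidefinite, and so the vanishing of its trace forces $(\rho^{1/2}\sigma\rho^{1/2})^{1/2}=0$. Squaring gives $\rho^{1/2}\sigma\rho^{1/2}=0$, and taking another trace and using the cyclic property yields $\Tr(\rho\sigma)=0$.

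For the reverse direction, suppose $\Tr(\rho\sigma)=0$. I would rewrite
\begin{equation}
\Tr(\rho\sigma) = \Tr(\sigma^{1/2}\rho\,\sigma^{1/2}) = \Tr\bigl((\rho^{1/2}\sigma^{1/2})^\dagger(\rho^{1/2}\sigma^{1/2})\bigr) = \|\rho^{1/2}\sigma^{1/2}\|_2^2,
\end{equation}
so the hypothesis forces $\rho^{1/2}\sigma^{1/2}=0$, and hence $\F(\rho,\sigma) = \|\rho^{1/2}\sigma^{1/2}\|_1 = 0$.

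There is no real obstacle here; the only subtlety is choosing the right trace manipulation so that one avoids having to reason about supports or kernels of $\rho$ and $\sigma$ separately. Recognizing that $\Tr(\rho\sigma)$ is precisely the squared Hilbert–Schmidt norm of $\rho^{1/2}\sigma^{1/2}$ is the key observation that makes the reverse direction immediate, and once that is in place both implications are one-line consequences of the definition of $\F$ via the trace norm.
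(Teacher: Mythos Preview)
Your proof is correct. Both directions are handled cleanly: the forward direction uses that a positive semidefinite matrix with zero trace must vanish, and the reverse direction identifies $\Tr(\rho\sigma)$ with $\|\rho^{1/2}\sigma^{1/2}\|_2^2$, which immediately forces $\rho^{1/2}\sigma^{1/2}=0$.

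Regarding comparison with the paper: the paper does not actually prove this statement. It is recorded as a \emph{Fact} about the Uhlmann fidelity, with the reader referred to the textbook of Nielsen and Chuang for details. Your argument is therefore not competing with any proof in the paper; it simply supplies a short, self-contained verification of a result the authors chose to quote rather than derive.
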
 

To see how this differs for the \newfidelity, consider two \emph{non-orthogonal} but distinct pure states $\ketbra{\psi}$ and $\ketbra{\phi}$. 
From above, we have that 
\begin{align}
	\braket{\psi }{\phi } \neq 0 \implies \F(\ketbra{\psi}, \ketbra{\phi}) \neq 0,
\end{align}
but it can also be checked that
\begin{align}
	\ketbra{\psi } \neq \ketbra{\phi } \implies \FGM(\ketbra{\psi}, \ketbra{\phi}) = 0,
\end{align}
which also follows from the following lemma. 
Thus we have non-orthogonal states with $0$ \newfidelity. 
However, if we have $\rho $ and $\sigma $ that satisfy $\Tr(\rho  \sigma ) = 0$, we do have $\FGM (\rho ,\sigma ) = 0$ by \Cref{fidbound} and \Cref{lem:bounds}.  
 
We now show that instead of orthogonality, there is another property which is equivalent to the \newfidelity being $0$.  

\begin{lemma}[Distinct image property] 
For any quantum states $\rho $ and $\sigma $, we have  
\begin{equation} 
	\FGM(\rho ,\sigma ) = 0
\; \text{ if and only if } \; 
\Range(\rho ) \cap \Range(\sigma ) = \left\{ 0 \right\}. 
\end{equation}  
\end{lemma}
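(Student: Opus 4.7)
The plan is to leverage the SDP formulation from Lemma~\ref{SDPlemma} together with a standard block-matrix fact: if
$\begin{pmatrix} \rho & W \\ W & \sigma \end{pmatrix}\succcurlyeq 0$, then $W$ must be Hermitian and $\Range(W)\subseteq \Range(\rho)\cap\Range(\sigma)$. This single observation essentially gives both directions.

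For the ``if'' direction, suppose $\Range(\rho)\cap\Range(\sigma)=\{0\}$. For any feasible $W$ in the SDP of Lemma~\ref{SDPlemma}, the block-matrix fact forces $\Range(W)=\{0\}$, hence $W=0$ and $\Tr(W)=0$. Taking the maximum gives $\FGM(\rho,\sigma)=0$.

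For the ``only if'' direction, assume there is a nonzero $v\in\Range(\rho)\cap\Range(\sigma)$. I would construct a feasible $W\succ 0$ on $\Span\{v\}$ with positive trace. The construction: since $v\in\Range(\rho)$, a short spectral argument (using that $\rho$ restricted to its range has a strictly positive minimum eigenvalue) yields a constant $\lambda>0$ with $vv^\dagger \preccurlyeq \lambda\rho$; analogously $vv^\dagger \preccurlyeq \mu\sigma$ for some $\mu>0$. Set $c:=1/\sqrt{\lambda\mu}$ and $W:=c\,vv^\dagger$. Then
\begin{equation}
\begin{pmatrix} \rho & W \\ W & \sigma \end{pmatrix}
\succcurlyeq
\begin{pmatrix} vv^\dagger/\lambda & c\,vv^\dagger \\ c\,vv^\dagger & vv^\dagger/\mu \end{pmatrix}
= \begin{pmatrix} 1/\lambda & c \\ c & 1/\mu \end{pmatrix}\otimes vv^\dagger,
\end{equation}
and the $2\times 2$ scalar matrix is PSD precisely because $c^2=1/(\lambda\mu)$. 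Hence $W$ is feasible with $\Tr(W)=c\|v\|^2>0$, so $\FGM(\rho,\sigma)>0$.

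The main obstacle is really just the block-matrix fact $\Range(W)\subseteq\Range(\rho)$; I expect this follows cleanly either by a Schur complement computation (after slightly perturbing $\rho,\sigma$ by $\eps\Id$ and taking limits) or by noting that $\begin{pmatrix}\rho & W\\ W & \sigma\end{pmatrix}\succcurlyeq 0$ implies that for any unit vector $w\in\Ker(\rho)$, the block vector $(w,0)$ must lie in the kernel of the whole matrix, forcing $Ww=0$ and so $\Range(W)\perp\Ker(\rho)=\Range(\rho)^\perp$. The mild technicality of extracting the scalars $\lambda,\mu$ from $v\in\Range(\rho)\cap\Range(\sigma)$ is the only other place where care is needed, but it is entirely standard finite-dimensional linear algebra.
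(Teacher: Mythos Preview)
Your proof is correct and follows essentially the same approach as the paper's: the paper packages the two directions into an appendix lemma (Lemma~\ref{hardlemma}, proved via Lemmas~\ref{techlem1} and~\ref{techlem2}), but the underlying arguments---the kernel/range inclusion $\Range(W)\subseteq\Range(\rho)\cap\Range(\sigma)$ for one direction and the rank-one feasible point $\lambda\,vv^\dagger$ for the other---are identical to yours. The only cosmetic difference is that in the converse the paper takes a single scalar $\lambda$ small enough for both $\rho$ and $\sigma$, whereas you use two scalars and the tensor factorization; both are fine.
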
 

\begin{proof}  
Define $W(\rho ,\sigma ) := \left\{ W \succeq 0 : \left[ \begin{array}{cc} 
\rho  & W \\ 
W & \sigma   
\end{array} \right] \succcurlyeq 0 \right\}$. 
Lemma~\ref{hardlemma} says that ${W(\rho ,\sigma ) = \{ 0 \}}$ if and only if $\Range(\rho ) \cap \Range(\sigma ) = \left\{ 0 \right\}$. 
Since $\rho \# \sigma \in W(\rho, \sigma)$ (see \Cref{handylemma}) we have that $\FGM(\rho, \sigma) = 0$ if and only if $W(\rho, \sigma) = \{ 0 \}$. 
\end{proof}

A few remarks are in order. 
This is a rather mysterious feature of a similarity measure as two distinct pure states (which may be close or far in other measures) always have $0$ \newfidelity. 
As presented in \Cref{eq:puregm}, another way to present this is 
\begin{equation}
\FGM(\ketbra{\psi }, \ketbra{\phi }) = \qty|\braket{\psi }{\phi }|^\infty, 
\end{equation} 
to better compare with \Cref{eq:pure,eq:pureh}.
There may be some applications where this behaviour is desirable.
When fixing one state as pure and maximizing the \newfidelity over the other input (belonging to some set), a strong preference is shown for the other state to be mixed, rather than being the ``wrong'' pure state. 
This {behaviour is similar to the quantum relative entropy}, and could be desirable if an application requires a strict notion of pure states being equal.

\begin{fact}[Second Fuchs-van de Graaf inequality] 
For any quantum states $\rho $ and $\sigma $, the \standard fidelity satisfies 
\begin{equation} 
	\F(\rho ,\sigma ) + \Delta (\rho ,\sigma ) \geq 1. \label{item:fvg2} 
\end{equation} 
\end{fact}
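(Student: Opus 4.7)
The plan is to reduce this quantum statement to a purely classical inequality by invoking the two complementary operational characterizations of $\F$ and $\Delta$. The ingredients are \textbf{(i)} the characterization of the Uhlmann fidelity as the \emph{minimum} classical fidelity over POVM-induced distributions, $\F(\rho,\sigma) = \min_{(M_i)} \FCL(p,q)$ with $p_i := \tr(M_i\rho)$ and $q_i := \tr(M_i\sigma)$ (recorded as a Fact in \Cref{sect:gm}, with the optimal POVM constructed from $\rho^{-1}\#\sigma$), and \textbf{(ii)} Helstrom's theorem, which dually expresses the trace distance as $\Delta(\rho,\sigma) = \max_{(M_i)} \tfrac{1}{2}\sum_i |p_i - q_i|$.

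First, I would select a POVM $(M_i)$ attaining the minimum in \textbf{(i)}, so that $\FCL(p,q) = \F(\rho,\sigma)$ exactly. For \emph{that same} POVM, \textbf{(ii)} forces $\tfrac{1}{2}\sum_i |p_i - q_i| \leq \Delta(\rho,\sigma)$. Adding these two facts yields
\[
\F(\rho,\sigma) + \Delta(\rho,\sigma) \;\geq\; \FCL(p,q) + \tfrac{1}{2}\sum_i |p_i - q_i|,
\]
so it suffices to prove the classical inequality $\FCL(p,q) + \tfrac{1}{2}\sum_i |p_i - q_i| \geq 1$ for arbitrary probability distributions.

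Second, for the classical step I would combine two elementary facts. The identity $\min(p_i,q_i) + \tfrac{1}{2}|p_i - q_i| = \tfrac{1}{2}(p_i + q_i)$, summed over $i$, gives $\sum_i \min(p_i,q_i) = 1 - \tfrac{1}{2}\sum_i |p_i - q_i|$. The pointwise bound $\min(p_i,q_i) \leq \sqrt{p_i q_i}$ (the minimum of two nonnegatives is dominated by their geometric mean) then gives $\sum_i \min(p_i,q_i) \leq \FCL(p,q)$. Chaining these two completes the reduction.

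The only nontrivial input is fact \textbf{(i)}, specifically the \emph{existence} of a POVM realizing the minimum (the projective measurement in the basis of $\rho^{-1}\#\sigma$ in the invertible case, plus a limiting argument in general); once that and Helstrom's theorem are granted, the quantum problem collapses to the short classical estimate above. I would expect the main obstacle a reader might raise to be the non-invertible case of \textbf{(i)}, which can be handled by a perturbation $\rho \mapsto \rho + \eps\Id$, $\sigma \mapsto \sigma + \eps\Id$ followed by renormalization and a continuity argument for both $\F$ and $\Delta$.
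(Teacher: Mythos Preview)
The paper does not actually prove this statement: it is recorded as a \emph{Fact} (not a Lemma), and the surrounding text explicitly defers to the textbook~\cite{N&C} for the standard properties of $\F$. The paper's only engagement with \eqref{item:fvg2} is to show that the analogous inequality \emph{fails} for $\FGM$.

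That said, your argument is correct and is essentially the textbook route. One minor remark: you invoke the full Helstrom characterization $\Delta(\rho,\sigma) = \max_{(M_i)} \tfrac{1}{2}\sum_i |p_i-q_i|$, but you only use the inequality direction (monotonicity of the trace distance under the measurement channel), so you could weaken ingredient \textbf{(ii)} accordingly. Likewise, the non-invertible case is not really an obstacle: the existence of an optimal POVM in \textbf{(i)} is already recorded as part of the measurement-fidelity Fact in \Cref{sect:gm}, so no perturbation argument is needed.
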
 
This does not hold for the \newfidelity in general. 
In fact, it fails \emph{maximally} in the sense that for any $\delta > 0$, we can construct $\rho $ and $\sigma $ such that 
\begin{equation} 
	\FGM(\rho ,\sigma ) + \Delta (\rho ,\sigma ) < \delta.  
\end{equation} 
To see this, fix a pure state $\ket{\psi}$ and define another  pure state $\ket{\phi}$ such that 
\begin{equation}
0 < \Delta (\ketbra{\psi}, \ketbra{\phi}) = \sqrt{1 - |\braket{\psi}{\phi}|^2} < \delta. 
\end{equation}   
Since $\Delta (\ketbra{\psi}, \ketbra{\phi})$ is positive, we have $\ketbra{\psi} \neq \ketbra{\phi}$ and thus $\FGM(\ketbra{\psi}, \ketbra{\phi}) = 0$. 
Combining, we have  
\begin{equation} 
\FGM(\ketbra{\psi}, \ketbra{\phi}) + \Delta (\ketbra{\psi}, \ketbra{\phi}) < \delta. 
\end{equation} 


\begin{table}[H]
	\centering
	\begin{tabular}{|l|c|c|c|c|} \hline
		\textbf{Property} & & $F=\F$ & $F=\FH$ & $F=\FGM $  \\\hline
		Symmetry & $F(\rho ,\sigma )=F(\sigma  ,\rho  )$ & \checkmark & \checkmark\cite{affinity,bhip} &\checkmark \\
		Bounds & $0\leq F(\rho ,\sigma )\leq 1$ & \checkmark & \checkmark\cite{affinity,bhip} &\checkmark \\
		\emph{Orthogonality} & $F=0 \iff \rho  \perp \sigma $ & \checkmark & \checkmark\cite{prettygood,affinity} &\color{red}{X} \\
		\textit{Distinct image} & $ F=0 \iff \Range (\rho ) \cap \Range (\sigma ) = \left\{ 0 \right\} $ & \color{red}{X} & \color{red}{X} & \checkmark \\ 
		Unity condition & $F=1 \iff \rho  = \sigma  $ & \checkmark & \checkmark\cite{affinity} &\checkmark \\
		{Additivity }& \tiny{$F(\lambda_1 \rho  _1 \oplus \lambda_2 \rho  _2, \lambda_1 \sigma  _1 \oplus \lambda_2 \sigma  _2) = \lambda_1 F(\rho  _1,\sigma  _1) + \lambda_2 F(\rho  _2,\sigma  _2)$} & \checkmark & \checkmark\cite{affinity} &\checkmark \\
		Multiplicativity & $F(\rho  _1 \otimes \rho  _2,\sigma  _1 \otimes \sigma  _2) = F(\rho  _1,\sigma  _1) F(\rho  _2,\sigma  _2)$ & \checkmark & \checkmark\cite{affinity} &\checkmark \\
		Unitary invariance & $F(\rho ,\sigma )=F(U\rho  U^{\dagger },U\sigma  U^{\dagger } )$ & \checkmark & \checkmark\cite{affinity} &\checkmark \\
		\Dpishort & $F(\Phi (\rho ) ,\Phi (\sigma ) )\geq F(\rho ,\sigma )$ & \checkmark & \checkmark\cite{petz}\tablefootnote{Note that for $\FH$, this was only shown for completely positive $\Phi $.} &\checkmark \\
		Joint concavity & $F \left(  \sum {p_i} \rho _i, \sum {p_i} \sigma  _i \right) \geq \sum {p_i} F(\rho  _i, \sigma  _i)$ & \checkmark & \checkmark\cite{affinity} &\checkmark \\
		First F-vdG & $F^2+\Delta ^2\leq 1$ & \checkmark & \checkmark\cite{prettygood,holevo} &\checkmark \\
		\emph{Second F-vdG} & $F+\Delta \geq 1$ & \checkmark & \checkmark\cite{prettygood,holevo} &\color{red}{X} \\
		Classical limit & $[\rho  ,\sigma  ]=0 \implies F = \FCL$ & \checkmark & \checkmark\cite{affinity} &\checkmark \\
		Pure states\tablefootnote{
		In \cite{affinity}, it was claimed that $\FH(\ketbra{\psi },\ketbra{\phi })=\qty|\braket{\psi }{\phi }|$, but one can verify directly that it is the square of this quantity.}
		 & $F(\ketbra{\psi },\ketbra{\phi }) = \, \cdots  $ & $ \qty|\braket{\psi }{\phi }| $ & $ \qty|\braket{\psi }{\phi }|^2$  & $\qty|\braket{\psi }{\phi }|^\infty $ \\ \hline
	\end{tabular}
	\caption{Table summarizing the key properties of the \standard fidelity $\F$, the Holevo fidelity $\FH$, and the \newfidelity $\FGM$. 
	Italics highlight the properties where the \newfidelity differs from the other two and
F-vdG is short for Fuchs-van de Graaf. 	
	}
	\label{tab:props}
\end{table}   
     

\subsection{Other interesting connections between the \newfidelity and the Uhlmann fidelity} 
     
In this subsection, we present a few other remaining characterizations of the \newfidelity and how they relate to the Uhlmann fidelity.

\begin{lemma} \label{neat} 
For any invertible quantum states $\rho $ and $\sigma $, we have 
\begin{equation}
	\FGM(\rho ,\sigma ) = \F(\rho , U \sigma U^{\dagger }) 
\end{equation} 
where $U = \rho ^{-1/2} \sigma ^{1/2} (\sigma ^{-1/2} \rho \sigma ^{-1/2})^{1/2} = \rho ^{-1/2} \left( \sigma  \# \rho \right) \sigma ^{-1/2}$ is a unitary matrix.  
\end{lemma}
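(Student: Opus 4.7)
The plan is to verify the lemma by a direct calculation, using the SDP/explicit definition of the matrix geometric mean together with unitary invariance of the trace norm. First, I would confirm that $U$ is well-defined (all the inverses exist because $\rho,\sigma$ are invertible) and that the two expressions for $U$ coincide; this follows immediately by substituting $\sigma\#\rho=\sigma^{1/2}(\sigma^{-1/2}\rho\sigma^{-1/2})^{1/2}\sigma^{1/2}$ into the second form.

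Next I would verify that $U$ is unitary. The key identity here is the matrix geometric mean "square" identity
\begin{equation}
(A\#B)\,A^{-1}\,(A\#B) \;=\; B,
\end{equation}
which is a one-line computation from the definition $A\#B=A^{1/2}(A^{-1/2}BA^{-1/2})^{1/2}A^{1/2}$. Applying it with $A=\sigma$ and $B=\rho$ gives
\begin{equation}
UU^{\dagger} \;=\; \rho^{-1/2}(\sigma\#\rho)\,\sigma^{-1}\,(\sigma\#\rho)\rho^{-1/2} \;=\; \rho^{-1/2}\rho\,\rho^{-1/2} \;=\; \Id,
\end{equation}
and $U^{\dagger}U=\Id$ follows symmetrically (or by noting $U$ is square).

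Then I would compute the Uhlmann fidelity on the right-hand side. Since $U$ is unitary, $(U\sigma U^{\dagger})^{1/2}=U\sigma^{1/2}U^{\dagger}$, so
\begin{equation}
\F(\rho,U\sigma U^{\dagger}) \;=\; \|\rho^{1/2} U \sigma^{1/2} U^{\dagger}\|_1 \;=\; \|\rho^{1/2} U \sigma^{1/2}\|_1,
\end{equation}
using unitary invariance of the trace norm to drop the trailing $U^{\dagger}$. The product simplifies beautifully:
\begin{equation}
\rho^{1/2} U \sigma^{1/2} \;=\; \rho^{1/2}\bigl(\rho^{-1/2}(\sigma\#\rho)\sigma^{-1/2}\bigr)\sigma^{1/2} \;=\; \sigma\#\rho.
\end{equation}
Since $\sigma\#\rho$ is positive definite (Property 2 of \Cref{basicprops}), its trace norm equals its trace, and by symmetry of the matrix geometric mean (Property 1),
\begin{equation}
\F(\rho,U\sigma U^{\dagger}) \;=\; \Tr(\sigma\#\rho) \;=\; \Tr(\rho\#\sigma) \;=\; \FGM(\rho,\sigma),
\end{equation}
which finishes the proof.

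The only step that requires a small trick is the identity $(A\#B)A^{-1}(A\#B)=B$ used to verify unitarity; everything else is a routine manipulation. Because the whole argument is algebraic and relies on invertibility of $\rho$ and $\sigma$, no limiting argument or SDP machinery is needed here.
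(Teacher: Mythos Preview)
Your proof is correct and is exactly the direct calculation the paper has in mind; the paper's own proof consists solely of the words ``Direct calculation.'' Your write-up simply supplies the details (the identity $(A\#B)A^{-1}(A\#B)=B$ for unitarity, and the simplification $\rho^{1/2}U\sigma^{1/2}=\sigma\#\rho$), all of which are straightforward and correct.
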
 

\begin{proof}
Direct calculation. 
\end{proof} 

We now make use of the duality theory of semidefinite programming to prove the following lemma. 

\begin{lemma} \label{dualSDP} 
For any quantum states $\rho $ and $\sigma $, we have 
\begin{equation} \label{eqDual}
	\FGM (\rho ,\sigma ) = \inf \left\{ \frac{1}{2} \inner{Y}{\rho } + \frac{1}{2} \inner{Z}{\sigma } : 
\left[ \begin{array}{cc} 
Y & X \\ 
X^\dagger  & Z 
\end{array} \right] \succcurlyeq 0, \, X + X^\dagger  = 2 \Id    
\right\}. 
\end{equation} 
\end{lemma}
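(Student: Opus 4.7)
The plan is to recognize the right-hand side of \eqref{eqDual} as the Lagrangian dual of the primal SDP for $\FGM$ given in Lemma~\ref{SDPlemma}, and then invoke strong duality. Writing the primal as
\[
\FGM(\rho,\sigma) = \sup_{W = W^\dagger} \left\{ \Tr(W) : M(W) \succeq 0 \right\}, \qquad M(W) := \begin{pmatrix} \rho & W \\ W & \sigma \end{pmatrix},
\]
I would introduce a positive semidefinite Lagrange multiplier $N = \begin{pmatrix} Y & X \\ X^\dagger & Z \end{pmatrix} \succeq 0$ and note that the constraint $M(W) \succeq 0$ can be enforced as $\inf_{N \succeq 0} \langle N, M(W)\rangle$, which equals $0$ when $M(W) \succeq 0$ and is $-\infty$ otherwise. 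Interchanging $\sup_W \inf_N$ to $\inf_N \sup_W$ then gives weak duality.

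Next I would evaluate the inner supremum. A direct block-matrix computation yields
\[
\langle N, M(W) \rangle = \langle Y, \rho \rangle + \langle Z, \sigma \rangle + \Tr\bigl((X + X^\dagger)\,W\bigr),
\]
so that the Lagrangian becomes $\langle Y,\rho\rangle + \langle Z,\sigma\rangle + \Tr((\Id + X + X^\dagger)W)$. Since $W$ ranges over all Hermitian matrices, the supremum is finite if and only if $X + X^\dagger = -\Id$, in which case it equals $\langle Y,\rho\rangle + \langle Z,\sigma\rangle$. After the harmless sign-flip $X \to -X$ (which preserves the PSD constraint via conjugation by $\mathrm{diag}(\Id,-\Id)$) and the rescaling $(Y,Z,X) \to (Y/2, Z/2, X/2)$, the dual takes exactly the form stated in \eqref{eqDual}, with $X + X^\dagger = 2\Id$ and objective $\tfrac{1}{2}\langle Y,\rho\rangle + \tfrac{1}{2}\langle Z,\sigma\rangle$.

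Finally, I would upgrade weak duality to equality by verifying Slater's condition on the dual side. The choice $Y = Z = 2\Id$ and $X = \Id$ gives $X + X^\dagger = 2\Id$, while the block matrix $\begin{pmatrix} 2\Id & \Id \\ \Id & 2\Id \end{pmatrix}$ is strictly positive definite (it is block diagonally dominant). Strict dual feasibility, combined with the fact that the primal value is finite (bounded above by $1$, via Lemma~\ref{lem:bounds}), implies strong duality for SDPs, so the infimum in \eqref{eqDual} equals $\FGM(\rho,\sigma)$.

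The only real obstacle is the careful bookkeeping of signs and factors of $2$ needed to match the precise form stated in the lemma; the duality itself is standard once the Lagrangian is written down, and no appeal to limits or invertibility of $\rho,\sigma$ is required because SDP duality handles the boundary cases uniformly.
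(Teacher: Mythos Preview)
Your approach is essentially the same as the paper's: identify the right-hand side of \eqref{eqDual} as the SDP dual of the primal in Lemma~\ref{SDPlemma} and close the gap via Slater using the strictly feasible point $(X,Y,Z)=(\Id,2\Id,2\Id)$, which is exactly the witness the paper exhibits. The only difference is that you spell out the Lagrangian derivation of the dual explicitly, whereas the paper simply asserts the dual form; your sign/scaling substitution $(Y,Z,X)\to(Y/2,Z/2,X/2)$ is correct when read as a change of variables, so the bookkeeping checks out.
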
 

\begin{proof} 
Recall from \Cref{SDPlemma} that 
\begin{equation} 
	\FGM (\rho ,\sigma ) = \max \left\{ \Tr(W) : 
\left[ \begin{array}{cc} 
\rho & W \\ 
W & \sigma  
\end{array} \right] \succcurlyeq 0   
\right\}. 
\end{equation} 
The dual to the above SDP is given as the right-hand side of \Cref{eqDual}. 
Thus, to prove that \Cref{eqDual} holds, it suffices to show that the two SDPs share the same value. 
By \emph{strong duality}, the two SDPs share the same value if the dual is bounded from below and is \emph{strictly feasible}, i.e., 
there exists dual feasible $(X,Y,Z)$ such that 
\begin{equation} 
\left[ \begin{array}{cc} 
Y & X \\ 
X^\dagger  & Z 
\end{array} \right] \succ 0. 
\end{equation} 
Since the dual is clearly nonnegative and $(X,Y,Z) := (\Id, 2 \Id, 2 \Id)$ is a strictly feasible solution, the result follows.  
\end{proof}

This dual characterization can be compared to the dual characterization of the \standard fidelity (with respect to the SDP~(\ref{eq:FSDP})) given below 
\begin{equation} \label{Alberti}
	\F (\rho ,\sigma ) = \inf \left\{ \frac{1}{2} \inner{Y}{\rho } + \frac{1}{2} \inner{Z}{\sigma } : 
\left[ \begin{array}{cc} 
Y & \Id \\ 
\Id & Z 
\end{array} \right] \succcurlyeq 0 
\right\} 
\end{equation} 
as shown in \cite{watrous2013simpler}. 
The extra freedom the dual SDP~(\ref{eqDual}) has is that we do not need to choose $X = \Id$. 
In fact, the constraint $X + X^\dagger  = 2 \Id$ can be written as $X = \Id + A$ where $A$ is anti-Hermitian, that is, $A^\dagger  = - A$. 
{Therefore, just as the SDPs characterizing the two fidelities differ only by a Hermitian constraint, the duals only differ by an anti-Hermitian variable.   
We summarize this in the SDPs below for the \newfidelity by noting that if the parts in blue are removed, one recovers SDPs for the Uhlmann fidelity     
\begin{align} 
	\FGM (\rho ,\sigma ) & = \max \left\{ \frac{1}{2} \Tr(X) + \frac{1}{2} \Tr(X^{\dagger}) : 
\left[ \begin{array}{cc} 
\rho & X \\ 
X^{\dagger} & \sigma  
\end{array} \right] \succcurlyeq 0, \; \textcolor{blue}{X \text{ is Hermitian}}    
\right\} \\ 
& = \inf \left\{ \frac{1}{2} \inner{Y}{\rho } + \frac{1}{2} \inner{Z}{\sigma } : 
\left[ \begin{array}{cc} 
Y & \Id + \textcolor{blue}{A} \\ 
\Id - \textcolor{blue}{A}  & Z 
\end{array} \right] \succcurlyeq 0, \; \textcolor{blue}{A \text{ is anti-Hermitian}}    
\right\}.
\end{align} 
        

\subsection{Special cases} \label{sect:cases}
   
In this subsection, we further explore the behaviour of the \newfidelity in some special cases and examples.
First, we show that just like the \standard and Holevo fidelities, the \newfidelity takes a simplified form when one state is pure.
We then show some numerical examples of the differences in the behaviours of these three quantities for the case when both states are qubits. 
 

\subsubsection{One state is pure} \label{sect:onepure}
We have already discussed the case where both states are pure, but each fidelity discussed in this work also has a simple form when one of the states {is} pure and the other is mixed. 

\begin{lemma} \label{lem:pure}
	For a pure state $\ketbra{\psi }$ and a positive definite quantum state $\rho $, the \standard, Holevo, and \newfidelities take the following forms:
\begin{align}
	\F\left( \rho ,\ketbra{\psi } \right) &= \bra{\psi } \rho \ket{\psi }^{1/2}, \\
	\FH\left( \rho ,\ketbra{\psi } \right) &= \bra{\psi } \rho ^{1/2} \ket{\psi }, \\
	\FGM\left( \rho ,\ketbra{\psi } \right) &= \bra{\psi } \rho ^{-1} \ket{\psi }^{-1/2}.  
	\label{<+label+>}
\end{align}
\end{lemma}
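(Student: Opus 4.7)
The plan is to handle the three identities separately, exploiting the rank-one structure of $\ketbra{\psi}$ and the identity $\ketbra{\psi}^{1/2} = \ketbra{\psi}$. The Uhlmann and Holevo cases will follow from a single-line computation from the definitions, while the Matsumoto case requires a short argument via the SDP in Lemma~\ref{SDPlemma}.

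For $\F(\rho,\ketbra{\psi}) = \tr\bigl((\rho^{1/2}\ketbra{\psi}\rho^{1/2})^{1/2}\bigr)$, I would note that $\rho^{1/2}\ketbra{\psi}\rho^{1/2}$ is rank one, positive semidefinite, and has trace $\bra{\psi}\rho\ket{\psi}$; it therefore equals $\bra{\psi}\rho\ket{\psi}\cdot\ketbra{v}$ for the unit vector $\ket{v} = \rho^{1/2}\ket{\psi}/\|\rho^{1/2}\ket{\psi}\|$. Its square root is then $\bra{\psi}\rho\ket{\psi}^{1/2}\ketbra{v}$, and taking the trace yields the claimed formula. For $\FH(\rho,\ketbra{\psi}) = \tr(\rho^{1/4}\ketbra{\psi}\rho^{1/4})$, cyclicity of the trace immediately collapses this to $\tr(\ketbra{\psi}\rho^{1/2}) = \bra{\psi}\rho^{1/2}\ket{\psi}$.

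For the Matsumoto case, I would start with the SDP in Lemma~\ref{SDPlemma} and use the assumption $\rho\succ 0$ to apply the Schur complement, rewriting the PSD block constraint as $W\rho^{-1}W\preceq\ketbra{\psi}$. The crucial observation is that $W\rho^{-1}W = (\rho^{-1/2}W)^\dagger(\rho^{-1/2}W)$ has the same rank as $W$ (since $\rho^{-1/2}$ is invertible), so being bounded above by the rank-one operator $\ketbra{\psi}$ forces $W$ itself to have rank at most one with image in $\mathrm{span}(\ket{\psi})$. Any such Hermitian $W$ then takes the form $W = \alpha\ketbra{\psi}$ with $\alpha\in\mathbb{R}$, and the constraint reduces to the scalar inequality $\alpha^2\bra{\psi}\rho^{-1}\ket{\psi}\leq 1$. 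Maximizing $\tr W = \alpha$ subject to this yields the optimal $\alpha^\star = \bra{\psi}\rho^{-1}\ket{\psi}^{-1/2}$, matching the claim. The only step requiring genuine care is the rank-one reduction of $W$ (in particular that the optimal $\ket{\phi}$ in any ansatz $\alpha\ketbra{\phi}$ is necessarily proportional to $\ket{\psi}$); once this is granted, no limits are needed because the SDP formulation already bypasses the singularity of $\ketbra{\psi}$.
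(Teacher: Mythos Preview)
Your proposal is correct and follows essentially the same approach as the paper. For the Matsumoto case the only cosmetic difference is the order of operations: the paper first invokes the general image-containment lemma for block-PSD matrices (\Cref{techlem1}) to conclude $\Range(W)\subseteq\Range(\ketbra{\psi})$, hence $W=\alpha\ketbra{\psi}$, and \emph{then} takes the Schur complement to obtain the scalar bound on $\alpha$; you instead take the Schur complement first and extract the rank-one form of $W$ from $W\rho^{-1}W\preceq\ketbra{\psi}$ via the rank argument. Both routes are short and equivalent.
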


\begin{proof}
The Uhlmann and Holevo fidelities can be seen from a direct calculation. 
From \Cref{techlem1}, we know that if $W$ satisfies 
\begin{equation} \label{purematrix}
\left[ \begin{array}{cc} 
\ketbra{\psi} & W \\ 
W & \rho 
\end{array} \right] \succcurlyeq 0 
\end{equation}  
then $W = \alpha \ketbra{\psi}$ for some $\alpha \in \R$.  
By taking Schur complements (see \Cref{ForSchur}), we know that $W$ (and hence $\alpha$) satisfies \Cref{purematrix} if and only if 
\begin{equation} 
\ketbra{\psi} \geq \alpha^2 \ketbra{\psi} \rho ^{-1} \ketbra{\psi}. 
\end{equation} 
This is obviously equivalent to 
\begin{equation} 
1 \geq \alpha^2 \bra{\psi} \rho ^{-1} \ket{\psi}. 
\end{equation}  
Maximizing over $\alpha$ yields the result.  
\end{proof}  
   

\subsubsection{Qubits}\label{sect:qubits}
       
A comparison of numerical behaviour of the fidelities is shown in \Cref{fig:compare} for a range of qubit states.
In the first row of diagrams, one state is fixed as pure and the other varies throughout the Bloch sphere. In the second row the first state is instead fixed with eigenvalues $\frac{3}{4}$ and $\frac{1}{4}$.    
  
\begin{figure}[H]
	\centering
	Fidelities for $\rho  =\ketbra{0}$; $\sigma  = \frac{1+\lambda}{2} \ketbra{\theta } + \frac{1-\lambda}{2}  \ketbra{\theta ^\perp}$: 
	\includegraphics[width=\linewidth]{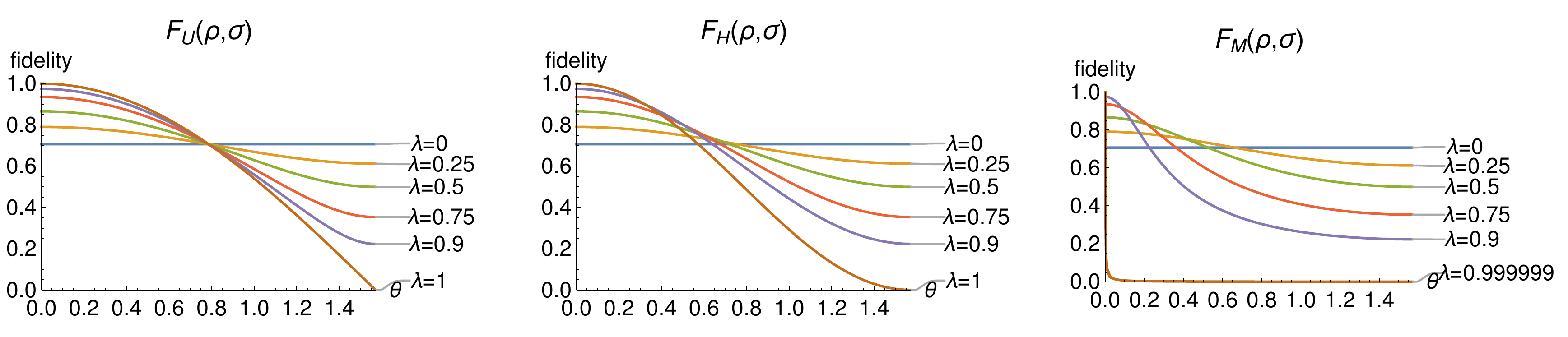}

	Fidelities for $\rho  =\frac{3}{4}\ketbra{0} + \frac{1}{4}\ketbra{1}$; $\sigma  = \frac{1+\lambda}{2} \ketbra{\theta } + \frac{1-\lambda}{2}  \ketbra{\theta ^\perp}$: 

	\includegraphics[width=\linewidth]{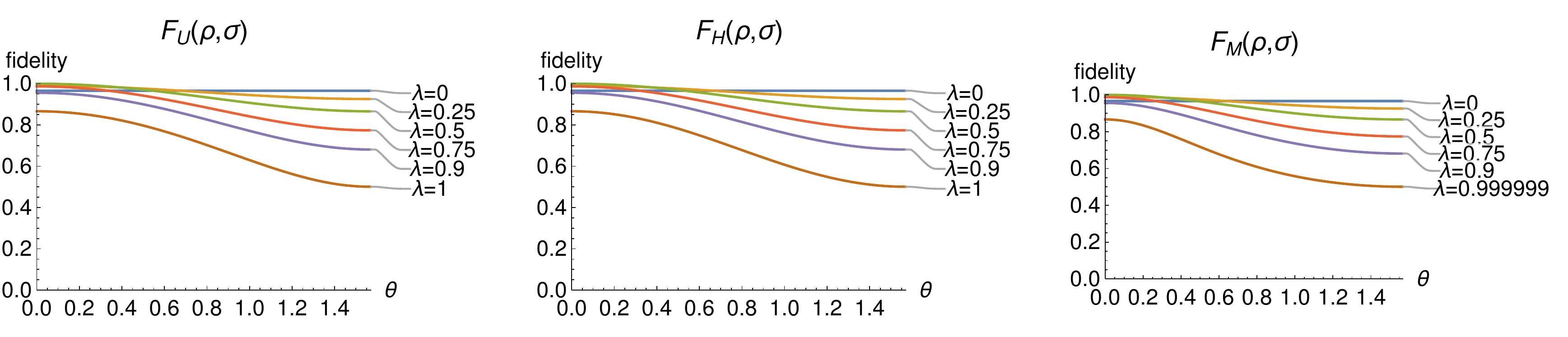}
\caption{Example plots of the three fidelities between two qubits. Here $\ket{\theta } $ is $\cos\theta \ket{0}+\sin\theta \ket{1}$, so that $\ket{\theta =0}=\ket{0}$ and $\ket{\theta =\pi /2}=\ket{1}$.
	These graphs are invariant under any global rotation in the Bloch sphere because of the unitary invariance property.
$\lambda $ represents the length of the vector in the Bloch sphere, such that $\lambda =1$ is a pure state and $\lambda =0$ is the maximally mixed state. 
We note that the fidelities are very similar when at least one state is significantly mixed (i.e.\ $\lambda \lesssim 0.5$) and the only significant discrepancies arise near the ``quantum limit'' of both states being close to pure.
}
	\label{fig:compare}
\end{figure}
   
The most obvious trend from these plots is that when either state is sufficiently mixed, the fidelities are all very similar (in line with our expectations from the previous discussions about this being the classical limit).
It is also apparent from the top-right plot that $\FGM $ only gets close to $0$ in extreme cases. Even a state with Bloch vector length $0.9$ has moderate fidelity with a pure state regardless of angle.
We explain this by more closely studying the geometry of the \newfidelity for qubits in the next section.  
         
\section{Geometric intepretation} 

The space of positive definite matrices can be pictured as a cone like the one shown in \Cref{fig:cone}.

\label{sect:geometry}
\begin{figure}[h!]
	\centering
	\def\svgwidth{\columnwidth}\vspace{-10pt}
	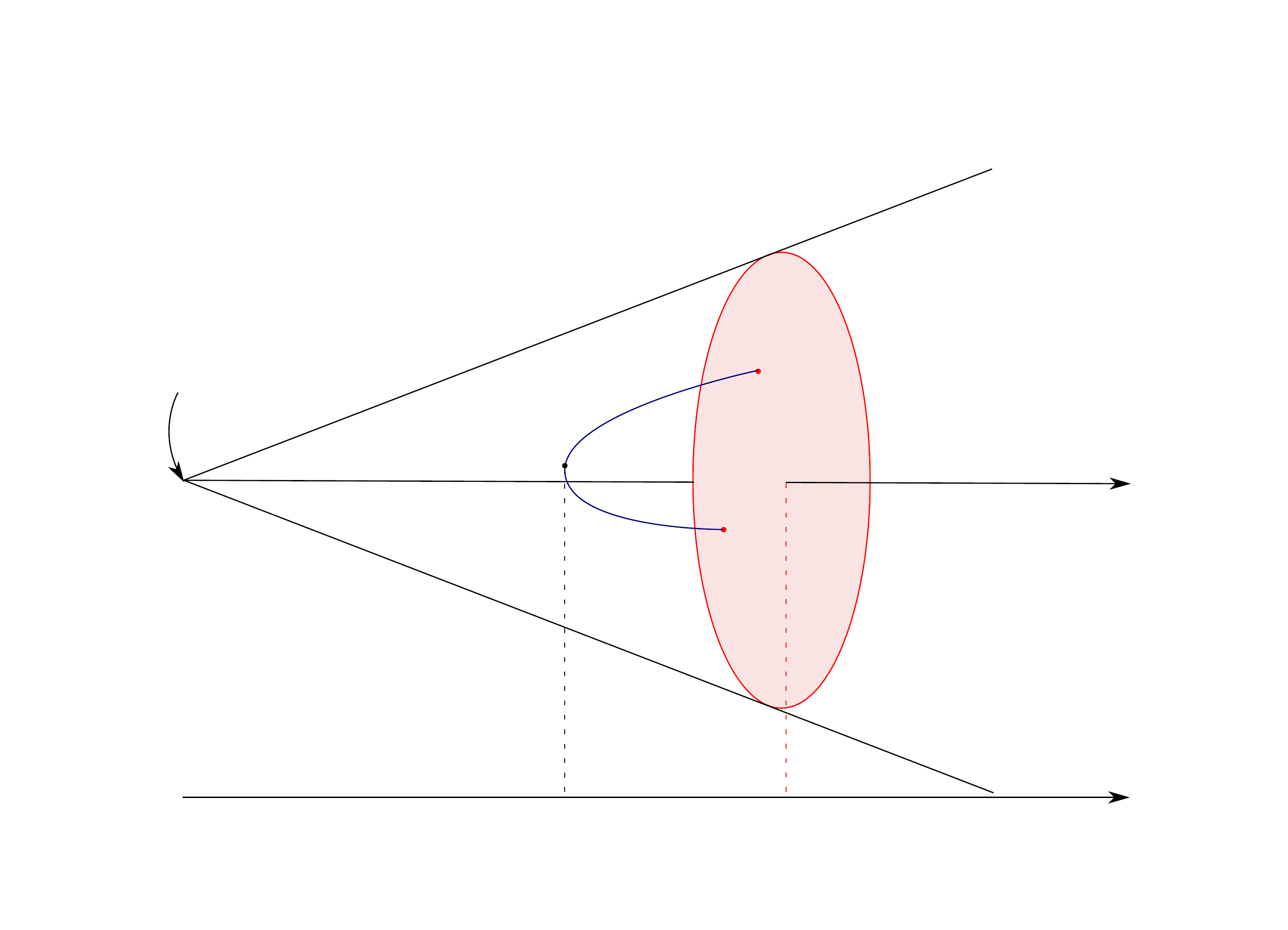 \vspace{-20pt}
	\caption{The space of positive semidefinite matrices is presented pictorially as a cone with boundary, embedded into the ambient space of Hermitian $n\times n$ matrices.
	The positive definite matrices form the interior of the cone, whereas singular matrices form the boundary (because an infinitesmal perturbation can change a zero eigenvalue to negative, putting it outside the cone).
	The central axis represents scalar multiples of the identity, such as the maximally mixed state.
The distance measure in \Cref{eq:distance} defines geodesics (i.e.\ shortest-length paths) within this conical space.
These geodesics always curve toward lower trace (leftward in the picture).
The \newfidelity, denoted by $\FGM $ -- the trace of the midpoint of this geodesic -- lies between $0$ and $1$, and measures the ``closeness'' between states according to how far leftward the geodesic curves. 
}
	\label{fig:cone}
\end{figure}
     
This space has a unique invariant Riemannian metric\footnote{Considering the exponential map from Hermitian matrices to positive definite matrices, this metric can be obtained as the push-forward of the Hilbert-Schmidt metric on the space of Hermitian matrices.} (see e.g.\ \cite{bhatia}), with the metric tensor $g$ defined {at a particular point (i.e.\ matrix) $M$} by 
\begin{align}
	g(\rho ,\sigma )|_M &= \tr(M^{-1} \rho M^{-1} \sigma ), \text{ for positive definite matrices }\rho ,\sigma ,M.
	\label{eq:metric}
\end{align}

With respect to this metric, the distance between positive definite matrices $\rho $ and $\sigma $ is 
\begin{align}
	\delta (\rho ,\sigma ) &= \left\| \log(\rho ^{-1/2} \sigma  \rho ^{-1/2})\right\|_F,
	\label{eq:distance}
\end{align}
where $\|\cdot\|_F$ is the Frobenius norm, defined as $\left\| A \right\|_F = \sqrt{\tr(A^\dagger A)}$. 
This is uniquely invariant as a distance measure in that it satisfies
\begin{align}
	\delta (X\rho X^{\dagger },X\sigma X^{\dagger }) = \delta (\rho ,\sigma ), \text{ for any invertible matrix }X.
	\label{eq:invariance}
\end{align} 
   
Using this metric, the matrix geometric mean $\rho  \# \sigma  $ is the midpoint of the minimal geodesic connecting $\rho  $ to $\sigma  $.
Equivalently, it is the unique matrix $\tau $ minimizing the least-squares distance 
\begin{equation}
\delta ^2(\rho  ,\tau )+\delta ^2(\tau ,\sigma  ). 
\end{equation} 
As shown in \Cref{fig:cone}, this geodesic curves towards the tip of the cone (the $0$ matrix), and the \newfidelity is a measure of how far it curves (i.e.\ how small the trace of the midpoint becomes). 
Quantum states that are close together (with respect to this metric) in the space of all quantum states have a geodesic which does not deviate far from that space, and so the trace of the midpoint is close to unity.     

However, it is clear from \Cref{eq:distance} that for positive semidefinite matrices that are not invertible, the metric is degenerate.
These matrices live on the boundary of the cone. 
This leads to the peculiar properties of the \newfidelity for pure states that we discussed {above}, and more generally for states with singular density matrices.

\subsection{Qubits}
The metric in \Cref{eq:metric} takes a particularly simple form for qubits, using the following parameterization for positive definite $2\times 2$ matrices.
\begin{align}
	\rho (\alpha ,r ,\theta ,\phi ) &=  e^{i \phi \sigma  _z} e^{i \theta \sigma  _y} 
	\begin{pmatrix}
		e^{- \frac{\alpha +r }{\sqrt{2}}} & 0 \\ 
		0 & e^{-\frac{\alpha -r }{\sqrt{2}}} 
	\end{pmatrix}  e^{- i \theta \sigma  _y}  e^{- i \phi \sigma  _z}
	, \text{ where } \alpha ,r ,\theta ,\phi  \in \R \\
	 &=  U D U^\dagger ,\text{ where }
	U = e^{i \phi \sigma  _z} e^{i \theta \sigma  _y}\text{ and } 
	D = \begin{pmatrix}
		e^{- \frac{\alpha +r }{\sqrt{2}}} & 0 \\ 
		0 & e^{-\frac{\alpha -r }{\sqrt{2}}} 
	\end{pmatrix}.
	\label{eq:parametrization}
\end{align} 
Then it can be shown that the metric in \Cref{eq:metric} becomes:
\begin{align}
	\mathrm{d}s^2 &= \tr\left( \rho  ^{-1} \mathrm{d}\rho  \rho ^{-1} \mathrm{d}\rho  \right) \\
	&= \mathrm{d}\alpha ^2+\mathrm{d}r ^2 + \sinh^2\!r \ \left( \mathrm{d} \theta  ^2 + \sin^2\!\theta  \ \mathrm{d}\phi ^2 \right).
\end{align}

The metric for $\left( r  ,\theta,\phi  \right)$ can be recognized as three-dimensional hyperbolic space in radial coordinates, meaning that the geometry of $2\times 2$ positive definite matrices with this metric is $\R \times \mathbb{H}_3$.

Now $r $ can be understood as parameterizing the purity of the state; with $r \to \infty $ for a pure state and $r =0$ for the maximally mixed state.
The parameter $\alpha $ is fixed for a quantum state once $r $ is determined due to the unit trace condition, as the trace of $\rho (\alpha ,r ,\theta ,\phi )$ is given by 
\begin{align}
	\tr(\rho (\alpha ,r ,\theta ,\phi )) = 2e^{- \frac{\alpha }{\sqrt{2}}}\cosh(\frac{r }{\sqrt{2}}). 
\end{align}
For quantum states, we have $\alpha = \alpha _q(r ):= -\sqrt{2} \log(\frac{1}{2} \cosh\frac{r }{\sqrt{2}}) $ so that $\rho (\alpha_q(r),r,\theta ,\phi )$ has trace $1$.
The angular coordinates $\theta $ and $\phi $ are analogous to the angular coordinates of the Bloch sphere.

Effectively, the $\sinh^2r $ prefactor in front of the angular coordinates in the metric means that a curve is always shorter if it bends ``inwards'' towards lower $r $.
A result of this is that the geodesic between two quantum states passes through states with smaller $r $ but the same $\alpha $, and hence has trace less than $1$ -- i.e.\ the \newfidelity is less than $1$ (see \Cref{fig:cone}).

\begin{figure}[h!]
	\centering
	\includegraphics[scale=1]{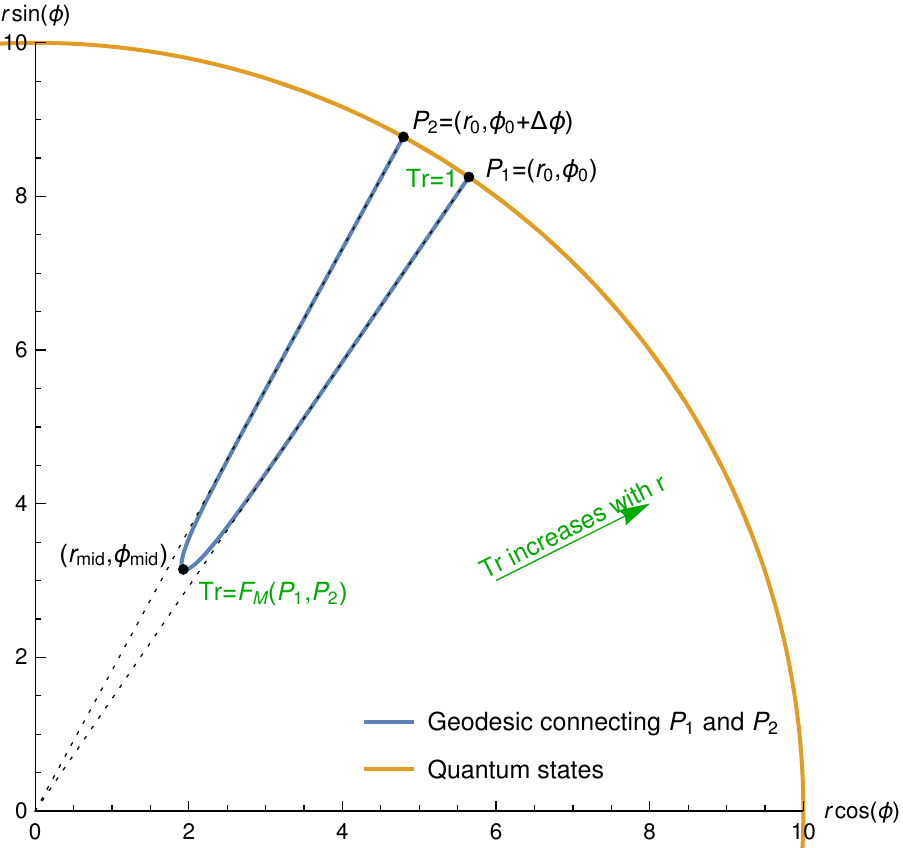}
	\caption{
	An example of the geometric interpretation of the \newfidelity.
		This is for the example discussed in the text: two qubit states with the same purity (characterized by the radial coordinate $r $) and angular coordinate $\phi $ (representing the angular coordinate separating them on the Bloch sphere). 
		However, unlike the Bloch sphere, only the orange line (at fixed $r =r _0$) represents valid quantum states with trace $1$; all other points in the plot are $2\times 2$ positive definite matrices with non-unit trace.
		The space is associated with a hyperbolic geometry, so that the minimal geodesic between the points labelled $\rho _1$ and $\rho _2$ (shown in blue) is not a straight line, but rather bends inwards.
		The trace at the midpoint along the geodesic (i.e.\ where $r =r _{\textrm{mid}}$) is the \newfidelity, and scales as $\exp(r _{\textrm{mid}} - r _0)$.
		As the states approach purity, $r_0 $ diverges to infinity, and $r _{\textrm{mid}}$ remains fixed as explained in the text, so the trace goes to zero regardless of how small $\Delta \phi $ is.
		This plot was made using $r_0 =10$, $\Delta \phi =0.1$, and the equation for the geodesic connecting the two points (with $\phi _0=0$ for convenience) is parameterized by the equation   
$		
		r (\phi ) = \arctanh\qty(\frac{\tanh(r_0)}{\cos(\phi )-\frac{\sin(\phi ) (\cos(\Delta \phi  )-1)}{\sin(\Delta \phi )}})$. 		
}
	\label{fig:hyper}
\end{figure}
   
Let us use this geometrical picture to understand why the \newfidelity of two almost-identical pure states is $0$.
Suppose we have two nearly-pure states with identical $r = r _0$ (which we eventually take to infinity so that the states become pure), and $\phi $ differing by a fixed (arbitrarily small) $\Delta \phi $.
Fix $\alpha = \alpha _q (r _0)$ and $\theta   = \frac{\pi }{2} $ for simplicity.
With these constraints, the geodesic between the states is restricted to a two-dimensional subspace parametrized by $r $ and $\phi $, with the reduced metric
\begin{align}
	\mathrm{d}s^2 = \mathrm{d}r ^2 + \sinh^2r \ \mathrm{d}\phi ^2.
	\label{<+label+>}
\end{align}
This is exactly the radial coordinates for the hyperbolic plane $\mathbb{H}_2$.

To determine the \newfidelity of these two states, we need to find the trace of the midpoint of the geodesic connecting them in this space.
Solving the geodesic equation gives the curve shown in \Cref{fig:hyper}, which curves inwards toward the centre.
One can show that the midpoint $(r _{\textrm{mid}},\phi _{\textrm{mid}})$ of the geodesic is at $r _{\textrm{mid}} = \arctanh\left(\tanh (r _0) \cos \left(\frac{\Delta \phi  }{2}\right)\right)$ and $\phi _{\textrm{mid}} = \phi _0 + \frac{1}{2} \Delta \phi $.
For large $r_0$, the former goes as
\begin{align}
	r _{\textrm{mid}}  =  \arctanh\cos \frac{\Delta \phi }{2} +O\left(e^{-2r _0}\right). 
	\label{<+label+>}
\end{align}
So for large $r _0$, $r _{\textrm{mid}}$ becomes independent of $r _0$.
This means that the minimum radius $r_{\textrm{mid}}$ of the geodesic shown in \Cref{fig:hyper} remains fixed even as $r_0 \to \infty $. 

Now let us evaluate the trace of this midpoint in order to determine the \newfidelity.
On this subspace, and at large $r_0$, the trace is
\begin{align}
	\tr (\rho (\alpha _q(r_0), r, \frac{\pi }{2}, \phi )) = \frac{\cosh( r / \sqrt{2})}{\cosh( {r_0 }/{\sqrt{2}})} 
= \frac{1}{2}e^{-\frac{r _0}{\sqrt{2}}} \cosh(\frac{r }{\sqrt{2}}) + O(e^{-\sqrt{2}r _0}).
\end{align}
Thus the \newfidelity of these two states is
\begin{align}
	\FGM\left(\rho \left(\alpha _q(r _0),r _0,\frac{\pi }{2}, \phi _0 \right),\rho \left(\alpha _q(r _0),r _0,\frac{\pi }{2}, \phi _0 + \Delta \phi  \right) \right) = f(\Delta \phi ) e^{-\frac{r_0}{\sqrt{2}}} + O(e^{-\sqrt{2}r_0}),
	\label{<+label+>}
\end{align}
with $f(\Delta \phi ) =  \frac{1}{2} \cosh( \frac{1}{\sqrt{2}} \arctanh\cos \frac{\Delta \phi }{2} )$ independent of $r_0$.
Thus for fixed $\Delta \phi $ and arbitrarily large $r _0$, we can see that this goes to $0$, demonstrating that the \newfidelity of two distinct pure states is $0$.
We can also see why it goes to $0$ so slowly when states are almost pure and almost identical, as shown in the top-right diagram of \Cref{fig:compare}; one can verify that $f(\Delta \phi )$ diverges to infinity as $\Delta \phi \to 0$, meaning that $r_0$ needs to become \textit{very} large to suppress this prefactor -- i.e.\ the states need to be ``almost pure'' before the strange behaviour of similar states having negligible fidelity occurs.
A similar argument to the one presented here can be constructed for non-qubit states. 
{This example demonstrates how the geometric picture can be useful in understanding the behaviour of the \newfidelity.}
     

\section{Conclusion and future work}\label{sect:conclusion}

In this work, we have explored the behaviour of the \newfidelity through the lens of semidefinite programming and motivated it by its connection to the geometry of positive definite matrices. 
In particular, through the semidefinite programming formulation, many proofs were simple due to the fact we do not have to worry about its limiting nature when dealing with non-invertible quantum states. 
We showed that this fidelity possesses many desirable properties one wishes to have when defining a similarity measure for quantum states. 

Since the Uhlmann fidelity function is used all over quantum theory, there is a grand landscape to see where the \newfidelity function could lend itself to be useful. 
For a concrete example, in \cite{watrous2013simpler} it was shown how to formulate the completely bounded norm of a superoperator using a characterization  
involving the \emph{maximum output fidelity}, defined as follows. 
For two positive maps $\Phi$ and $\Psi$, we define the maximum output fidelity as 
\begin{equation} 
\mathrm{F}_{\max}(\Phi, \Psi) = \max \{ \F(\Phi(\rho), \Psi(\sigma)) \} 
\end{equation}
where $\rho$ and $\sigma$ are quantum states.  
Thus, changing the fidelity above to the \newfidelity leads to a different norm-like function on superoperators. 
Considering there are fewer similarity/distance measures for quantum channels as there are for quantum states, this line of research could prove itself to be rewarding. 

\section*{Acknowledgments} 
We thank J\k{e}drzej Kaniewski and Francesco Buscemi for helpful discussions and comments on the first arxiv version and we thank William Donnelly for discussions about the geometric meaning of the \newfidelity for qubits. 
We also thank the Quantum Information and Quantum Foundations group members at the Perimeter Institute as well as many members of the Institute for Quantum Computing for fruitful discussions. 
In particular, we thank Vern Paulsen and Mizanur Rahaman for pointing out old math papers which discuss the matrix geometric mean (in particular the work \cite{Ando}).
Finally, we thank Mark M. Wilde for pointing us to the works by Matsumoto.

Research at Perimeter Institute is supported in part by the Government of Canada through the Department of Innovation, Science and Economic Development Canada and by the Province of Ontario through the Ministry of Economic Development, Job Creation and Trade. 

SC is grateful for support from the Knight-Hennessy Scholars program and the Perimeter Scholars International program.
   
\bibliography{refs}
   

\appendix 


\section{Technical lemmas about block matrices} 
\label{appendix}
    
Here we present some technical lemmas which are independent of the geometric mean but are useful for some of our proofs.

\begin{fact}[See, e.g.,\cite{bhatiasOTHERbook}] 
\label{Lemma26} 
For any positive semidefinite matrices $A$ and $B$, 
$X$ satisfies 
\begin{equation} 
\left[ \begin{array}{cc} 
A & X \\ 
X^\dagger  & B 
\end{array} \right] \succcurlyeq 0 
\end{equation}  
if and only if $X = A^{\half} V B^{\half}$ for some $V$ (not necessarily Hermitian) satisfying $\| V \|_{\infty} \leq 1$ (where $\|\cdot \|_\infty $ is the $\infty $-norm, i.e.\ the largest singular value of the matrix).  
\end{fact}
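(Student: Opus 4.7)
The plan is to prove both directions by explicit factorization, reducing the problem to the well-known case of the block matrix $\begin{pmatrix} I & V \\ V^\dagger & I \end{pmatrix}$, and then handling singular $A$ or $B$ by a perturbation argument.

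For the ``if'' direction, suppose $X = A^{1/2} V B^{1/2}$ with $\|V\|_\infty \leq 1$. I would write
\begin{equation}
\left[ \begin{array}{cc} A^{1/2} & 0 \\ 0 & B^{1/2} \end{array} \right]
\left[ \begin{array}{cc} I & V \\ V^\dagger & I \end{array} \right]
\left[ \begin{array}{cc} A^{1/2} & 0 \\ 0 & B^{1/2} \end{array} \right]
= \left[ \begin{array}{cc} A & A^{1/2} V B^{1/2} \\ B^{1/2} V^\dagger A^{1/2} & B \end{array} \right],
\end{equation}
which equals the desired block matrix. Since conjugation by a Hermitian matrix preserves positive semidefiniteness, it suffices to check $\begin{pmatrix} I & V \\ V^\dagger & I \end{pmatrix} \succcurlyeq 0$. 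Taking the Schur complement with respect to the upper-left $I$ block, this is equivalent to $I - V^\dagger V \succcurlyeq 0$, i.e.\ $\|V\|_\infty \leq 1$.

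For the ``only if'' direction, I would first treat the case where $A, B \succ 0$. Conjugating the block matrix by the (Hermitian, invertible) diagonal $\mathrm{diag}(A^{-1/2}, B^{-1/2})$ gives $\begin{pmatrix} I & V \\ V^\dagger & I \end{pmatrix} \succcurlyeq 0$ where $V := A^{-1/2} X B^{-1/2}$, so $X = A^{1/2} V B^{1/2}$ and Schur complements yield $I \succcurlyeq V^\dagger V$, equivalently $\|V\|_\infty \leq 1$.

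The main obstacle is the singular case, since $A^{-1/2}$ and $B^{-1/2}$ need not exist. Here I would use a perturbation-and-limit argument: for $\eps > 0$, the matrix $\left[ \begin{smallmatrix} A + \eps I & X \\ X^\dagger & B + \eps I \end{smallmatrix} \right]$ is also positive semidefinite, and now the invertible case applies, producing $V_\eps$ with $\|V_\eps\|_\infty \leq 1$ and $X = (A+\eps I)^{1/2} V_\eps (B+\eps I)^{1/2}$. Since the closed unit ball in operator norm is compact in finite dimensions, some subsequence $V_{\eps_k}$ converges to a limit $V$ with $\|V\|_\infty \leq 1$; continuity of the matrix square root on the positive semidefinite cone then gives $X = A^{1/2} V B^{1/2}$ in the limit. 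The subtle part to check is that the limit equation genuinely recovers $X$ (rather than merely a matrix that agrees with $X$ on the range of $A^{1/2}$ and $B^{1/2}$), which follows because the perturbed identity holds as a literal equality for each $\eps$ and both sides converge.
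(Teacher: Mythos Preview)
Your argument is correct. The paper does not actually prove this statement: it is recorded as a \emph{Fact} with a citation to Bhatia's book and no proof is given, so there is nothing in the paper to compare against. Your factorization-plus-Schur-complement approach, together with the perturbation/compactness argument to handle singular $A$ or $B$, is the standard route and matches what one finds in the cited reference.
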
 

The lemma above can be used to prove the following lemma. 
    
\begin{lemma} \label{techlem1}
For any positive semidefinite matrices $A$ and $B$, if $W$ satisfies 
\begin{equation} \label{equation212}
\left[ \begin{array}{cc} 
A & W \\ 
W & B 
\end{array} \right] \succcurlyeq 0,  
\end{equation}  
then $\Range(W) \subseteq \Range(A) \cap \Range(B)$. 
\end{lemma}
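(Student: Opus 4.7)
The plan is to deduce the result almost immediately from Fact 2.6 (the paper's \Cref{Lemma26}) combined with two standard facts: (i) for a PSD matrix $M$, one has $\Range(M) = \Range(M^{1/2})$, and (ii) PSD-ness of the block matrix forces the off-diagonal blocks to be adjoints of each other, so the hypothesis $W$-in-both-off-diagonals really means $W$ is Hermitian.

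First I would invoke Fact 2.6 with $X = W$ to obtain a (not-necessarily-Hermitian) contraction $V$ with $\|V\|_\infty \le 1$ such that
\begin{equation}
W = A^{1/2} V B^{1/2}.
\end{equation}
From this factorization, every vector in the image of $W$ is visibly also in the image of $A^{1/2}$, giving $\Range(W) \subseteq \Range(A^{1/2}) = \Range(A)$, where the equality uses fact (i) above (proved via $\ker(A) = \ker(A^{1/2})$ and orthogonal complementation).

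Next I would observe that because the block matrix in \eqref{equation212} is positive semidefinite it is in particular Hermitian, and since its diagonal blocks $A,B$ are Hermitian, its off-diagonal blocks must be mutual adjoints. Thus $W = W^\dagger$. Applying this to the factorization yields
\begin{equation}
W \;=\; W^\dagger \;=\; B^{1/2} V^\dagger A^{1/2},
\end{equation}
and by the same argument as above, $\Range(W) \subseteq \Range(B^{1/2}) = \Range(B)$. Intersecting the two inclusions gives the claim.

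I do not anticipate any real obstacle: the only slightly subtle point is the observation that Hermiticity of $W$ is built into the hypothesis (because the block matrix is Hermitian), which is what lets us run the range argument on \emph{both} factors. Once that is noted, Fact 2.6 does all the work and no limiting arguments or $\varepsilon$-regularizations are needed, which fits the paper's stated philosophy of avoiding such subtleties.
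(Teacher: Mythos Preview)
Your proposal is correct and matches the paper's own proof essentially line for line: the paper also invokes \Cref{Lemma26} to write $W = A^{1/2} V B^{1/2} = B^{1/2} V^{\dagger} A^{1/2}$ (using Hermiticity of $W$), deduces $\Range(W) \subseteq \Range(A^{1/2}) \cap \Range(B^{1/2})$, and then uses $\Range(A^{1/2}) = \Range(A)$. Your write-up is slightly more explicit about why $W$ must be Hermitian, but the argument is identical.
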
  
 
\begin{proof}  
Suppose $W$ satisfies \Cref{equation212}. 
Then by Lemma~\ref{Lemma26}, we have that 
\begin{equation} 
W = A^{\half} V B^{\half} = B^{\half} V^\dagger  A^{\half} 
\end{equation} 
since it is Hermitian. 
Thus, the image of $W$ is contained in the image of both $A^{1/2}$ and $B^{1/2}$. 
Since the image of $A^{1/2}$  is equal to the image of $A$
and the image of $B^{1/2}$ is equal to the image of $B$, 
the result follows.   
\end{proof}

The following lemma characterizes a sufficient condition for the positive semidefiniteness of certain block matrices. 

\begin{lemma} \label{techlem2}
For any positive semidefinite matrices $A$, $B$, and $W$,  if we have $A \succcurlyeq W$ and $B \succcurlyeq W$, then we have 
\begin{equation} 
\left[ \begin{array}{cc} 
A & W \\ 
W & B 
\end{array} \right] \succcurlyeq 0. 
\end{equation}  
\end{lemma}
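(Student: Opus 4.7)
My plan is to verify positive semidefiniteness directly using the characterization via quadratic forms, rather than invoking Schur complements (which would force me to handle invertibility separately via a limiting argument).

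The approach is as follows. A Hermitian block matrix is positive semidefinite if and only if $\begin{pmatrix} x \\ y \end{pmatrix}^{\dagger} \begin{pmatrix} A & W \\ W & B \end{pmatrix} \begin{pmatrix} x \\ y \end{pmatrix} \geq 0$ for all vectors $x$ and $y$ of matching dimension. (Note that $W$ is Hermitian because $A \succeq W$ forces $A - W$ to be Hermitian, and $A$ is Hermitian.) The first step is simply to expand this quadratic form to obtain $x^{\dagger} A x + x^{\dagger} W y + y^{\dagger} W x + y^{\dagger} B y$.

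Next, I would apply the two hypotheses $A \succeq W$ and $B \succeq W$ to the ``diagonal'' terms, replacing $x^{\dagger} A x$ with the smaller quantity $x^{\dagger} W x$, and similarly for $y^{\dagger} B y$. This yields the lower bound
\begin{equation}
x^{\dagger} W x + x^{\dagger} W y + y^{\dagger} W x + y^{\dagger} W y.
\end{equation}
The key observation is that this expression factors as $(x + y)^{\dagger} W (x + y)$, which is nonnegative by the assumption $W \succeq 0$.

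I do not anticipate any substantive obstacle. The argument avoids inverses entirely and so works uniformly for all positive semidefinite $A$, $B$, $W$, with no limiting procedure needed. The only subtlety worth flagging is verifying Hermiticity of $W$ from the stated hypotheses, which is immediate. The appeal of this approach over the Schur-complement route (which would require assuming $A$ invertible, proving $B \succeq W A^{-1} W$ by the operator-monotonicity of inversion applied to $W \preceq A$, and then taking $\varepsilon \to 0$) is exactly the kind of simplification that the authors emphasize is characteristic of the SDP perspective used throughout the paper.
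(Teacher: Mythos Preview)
Your proof is correct and is essentially the same as the paper's: the paper writes the one-line chain
\[
\begin{pmatrix} A & W \\ W & B \end{pmatrix} \succcurlyeq \begin{pmatrix} W & W \\ W & W \end{pmatrix} = \begin{pmatrix} 1 & 1 \\ 1 & 1 \end{pmatrix} \otimes W \succcurlyeq 0,
\]
which is your quadratic-form argument packaged as a matrix inequality (your factorization $(x+y)^{\dagger}W(x+y)$ is exactly the computation showing the tensor product is PSD). One minor note: your aside about deducing Hermiticity of $W$ from $A\succcurlyeq W$ is unnecessary, since $W$ is already assumed positive semidefinite in the hypotheses.
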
 

\begin{proof}  
We have 
\begin{equation} 
\left[ \begin{array}{cc} 
A & W \\ 
W & B 
\end{array} \right] 
\succcurlyeq  
\left[ \begin{array}{cc} 
W & W \\ 
W & W 
\end{array} \right] 
= 
\left[ \begin{array}{cc} 
1 & 1 \\ 
1 & 1  
\end{array} \right] \otimes W 
\succcurlyeq 0.
\end{equation}  
\end{proof} 
 
The following technical lemma helps characterize when there are non-trivial feasible solutions to the SDP given in~\Cref{eq:FGMSDP}.  
        
\begin{lemma} \label{hardlemma}
For any positive semidefinite matrices $A$ and $B$, we have 
\begin{equation} 
\left\{ W \succeq 0: \left[ \begin{array}{cc} 
A & W \\ 
W & B 
\end{array} \right] \succcurlyeq 0 \right\} = \{ 0 \} 
\; \text{ if and only if } \; 
\Range(A) \cap \Range(B) = \{ 0 \}. 
\end{equation}
\end{lemma}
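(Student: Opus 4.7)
The plan is to prove each direction separately, leaning on the two preceding technical lemmas in the appendix.

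The ``$\Leftarrow$'' direction is immediate from Lemma~\ref{techlem1}: any $W \succeq 0$ in the given set satisfies $\Range(W) \subseteq \Range(A) \cap \Range(B)$, so the assumption $\Range(A) \cap \Range(B) = \{0\}$ forces $\Range(W) = \{0\}$, i.e., $W = 0$.

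For the ``$\Rightarrow$'' direction I argue by contrapositive and construct a nonzero feasible $W$ explicitly. Pick a nonzero $v \in \Range(A) \cap \Range(B)$ and write $v = A a = B b$ for some vectors $a, b$. I would take $W := \alpha\, v v^\dagger$ with $\alpha > 0$ to be chosen small. By Lemma~\ref{techlem2}, to show the block matrix is positive semidefinite it suffices to show $A \succeq \alpha\, v v^\dagger$ and $B \succeq \alpha\, v v^\dagger$. Substituting $v = A a$, the first inequality becomes $A \succeq \alpha\, A a a^\dagger A$; rewriting $x^\dagger (A - \alpha\, A a a^\dagger A) x = x^\dagger A x - \alpha\, |a^\dagger A x|^2$ and applying Cauchy--Schwarz for the positive semidefinite form $\langle y, z \rangle_A := y^\dagger A z$ reduces the condition to the scalar inequality $\alpha\,(a^\dagger A a) \leq 1$. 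Note that $v \neq 0$ guarantees $A a \neq 0$, which gives $a^\dagger A a = \|A^{1/2} a\|^2 > 0$; the analogous statement for $B$ gives $b^\dagger B b > 0$. Hence any $\alpha \leq \min\{1/(a^\dagger A a),\, 1/(b^\dagger B b)\}$ produces a nonzero $W$ in the feasible set, completing the proof.

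The main obstacle is the existence part of the ``$\Rightarrow$'' direction: while Lemma~\ref{techlem1} handles the easy direction in one step, the converse requires actively exhibiting a nonzero positive semidefinite $W$ that remains block-positive semidefinite even when both $A$ and $B$ are singular, so ordinary Schur complements are unavailable. The rank-one ansatz $W = \alpha\, v v^\dagger$ supported on the common range exactly matches the hypothesis of Lemma~\ref{techlem2}, so the only remaining work is the Cauchy--Schwarz bound on $\alpha$ described above.
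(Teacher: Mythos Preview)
Your proof is correct and follows essentially the same approach as the paper: both directions invoke the same two technical lemmas, and the nonzero feasible $W$ in the contrapositive direction is the same rank-one matrix $\alpha\, v v^\dagger$ supported on the common range. The only difference is that the paper merely asserts the existence of a sufficiently small $\lambda>0$ with $A \succcurlyeq \lambda\, vv^\dagger$ and $B \succcurlyeq \lambda\, vv^\dagger$, whereas you supply the explicit Cauchy--Schwarz bound $\alpha \leq \min\{1/(a^\dagger A a),\, 1/(b^\dagger B b)\}$, which fills in that step.
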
 

\begin{proof} 
Define $W(A,B) := \left\{ W \succeq 0 : \left[ \begin{array}{cc} 
A & W \\ 
W & B 
\end{array} \right] \succcurlyeq 0 \right\}$ for brevity. 
If $\Range(A) \cap \Range(B) = \{ 0 \}$, then from Lemma~\ref{techlem1}, we have that $W(A,B) = \{ 0 \}$. 
Conversely, suppose there exists a nonzero vector $x \in 
\Range(A) \cap \Range(B)$. 
Then we see that there exists $\lambda > 0$, possibly very small, such that ${A\succcurlyeq\lambda xx^\dagger }$ and ${B \succcurlyeq \lambda xx^\dagger }$. 
Thus, by Lemma~\ref{techlem2}, we have that $\lambda xx^\dagger  \in W(A,B)$ and thus $W(A,B)$ contains a nonzero matrix. 
\end{proof} 

The following well-known fact gives a necessary and sufficient condition for the positive semidefiniteness of block matrices. 
        
\begin{fact} \label{ForSchur}
For any positive \emph{definite} matrix $B$, we have  
\begin{equation} 
\left[ \begin{array}{cc} 
A & X \\ 
X^\dagger  & B
\end{array} \right] \succcurlyeq 0 
\; \text{ if and only if } \; 
A \succeq XB^{-1}X^\dagger .  
\end{equation}
\end{fact}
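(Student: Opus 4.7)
The plan is to prove this via a congruence transformation that reduces the block matrix to block-diagonal form, at which point positive semidefiniteness becomes transparent. Since $B \succ 0$ is invertible, we have the identity
\begin{equation}
\begin{bmatrix} A & X \\ X^\dagger & B \end{bmatrix}
= L \begin{bmatrix} A - XB^{-1}X^\dagger & 0 \\ 0 & B \end{bmatrix} L^\dagger,
\qquad
L := \begin{bmatrix} \Id & XB^{-1} \\ 0 & \Id \end{bmatrix},
\end{equation}
which can be verified by multiplying out the right-hand side. First I would record this block LDL-type factorization and note that $L$ is invertible (it is unit upper triangular, with inverse obtained by negating the off-diagonal block).

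Next I would invoke the fact that congruence by an invertible matrix preserves positive semidefiniteness, i.e.\ for invertible $L$, a Hermitian matrix $M$ satisfies $M \succcurlyeq 0$ if and only if $LML^\dagger \succcurlyeq 0$. Applying this to the factorization, the original block matrix is PSD if and only if the block-diagonal middle matrix is PSD. A block-diagonal Hermitian matrix is PSD precisely when each diagonal block is PSD, so the condition becomes
\begin{equation}
A - XB^{-1}X^\dagger \succcurlyeq 0 \quad \text{and} \quad B \succcurlyeq 0.
\end{equation}
Since the second condition is given by hypothesis (indeed $B \succ 0$), the statement reduces to $A \succcurlyeq XB^{-1}X^\dagger$, which is exactly what we wanted.

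There is no real obstacle here beyond checking the factorization identity, since everything else is a standard use of invariance of PSD-ness under congruence. If one preferred a more hands-on route, one could instead argue directly from the definition: writing a test vector as $\begin{bmatrix} u \\ w \end{bmatrix}$, the quadratic form $u^\dagger A u + u^\dagger X w + w^\dagger X^\dagger u + w^\dagger B w$ is nonnegative for all $(u,w)$ if and only if its minimum over $w$ (for each fixed $u$) is nonnegative; since $B \succ 0$, the unique minimizer is $w = -B^{-1}X^\dagger u$, and substituting back yields $u^\dagger(A - XB^{-1}X^\dagger)u \geq 0$ for all $u$. Either route gives the result with minimal effort, the Schur complement/congruence approach being the most structural.
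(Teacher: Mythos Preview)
Your argument is correct: the block-LDL factorization you wrote down is valid (using that $B$, and hence $B^{-1}$, is Hermitian so that $(XB^{-1})^\dagger = B^{-1}X^\dagger$), congruence by the invertible $L$ preserves positive semidefiniteness, and the block-diagonal reduction gives exactly the Schur complement condition. The alternative quadratic-form minimization you sketched is also fine.

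There is nothing to compare against here: the paper states this as a well-known fact and gives no proof of its own. Your write-up is the standard Schur complement argument and would be an appropriate justification if one were required.
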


We now prove \Cref{handylemma} as used in the main text, restated as \Cref{handylemmaproof}.
\begin{lemma} \label{handylemmaproof}
Given positive semidefinite matrices $P$ and $Q$, if $W$ satisfies 
\begin{equation} \label{eqlem32proof}
\left[ \begin{array}{cc} 
P & W \\ 
W & Q  
\end{array} \right] \succcurlyeq 0 
\end{equation} 
then $W \preceq P \# Q$. 
Moreover, $P \# Q$ satisfies \Cref{eqlem32proof}. 
\end{lemma}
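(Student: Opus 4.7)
The plan is to reduce to the case of invertible $P$ and $Q$ via the limit definition of the matrix geometric mean, and then apply the Schur complement characterization of Fact~\ref{ForSchur}. Note that whenever the block matrix in \eqref{eqlem32proof} is PSD it is in particular Hermitian, which forces $W = W^\dagger$; so I may treat $W$ as Hermitian throughout. For invertible $P, Q$, Fact~\ref{ForSchur} says \eqref{eqlem32proof} holds iff $P \succeq W Q^{-1} W$.

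For the ``moreover'' statement, I would first verify by direct calculation that $W = P \# Q$ satisfies this Schur condition with equality when $P,Q$ are invertible. Setting $X := P^{-1/2} Q P^{-1/2}$ and noting $X^{-1} = P^{1/2} Q^{-1} P^{1/2}$, one obtains
\begin{equation}
(P \# Q)\, Q^{-1} (P \# Q) = P^{1/2} X^{1/2}\bigl(P^{1/2} Q^{-1} P^{1/2}\bigr) X^{1/2} P^{1/2} = P^{1/2} X^{1/2} X^{-1} X^{1/2} P^{1/2} = P.
\end{equation}
For non-invertible $P, Q$, I would replace $(P,Q)$ by $(P_\eps,Q_\eps)$, apply the invertible case to obtain that the corresponding block matrix with off-diagonal entry $P_\eps \# Q_\eps$ is PSD, and pass to the limit $\eps \to 0$ using \eqref{eq:limit2} together with closedness of the PSD cone.

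For the main inequality $W \preceq P \# Q$, first suppose $P, Q$ are invertible. Conjugating $P \succeq W Q^{-1} W$ by $Q^{-1/2}$ on both sides gives $A \succeq \tilde W^{\,2}$, where $A := Q^{-1/2} P Q^{-1/2}$ and $\tilde W := Q^{-1/2} W Q^{-1/2}$ is Hermitian. Operator monotonicity of the square root applied to $A \succeq \tilde W^{\,2} = |\tilde W|^{2}$ then yields $A^{1/2} \succeq |\tilde W| \succeq \tilde W$. Conjugating back by $Q^{1/2}$ and using the symmetric form $P \# Q = Q^{1/2} A^{1/2} Q^{1/2}$ gives $P \# Q \succeq W$. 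For general PSD $P, Q$, adding $\eps \Id$ to each diagonal block of \eqref{eqlem32proof} preserves positivity, so the invertible case applies to $(P_\eps, Q_\eps, W)$ for every $\eps > 0$, giving $W \preceq P_\eps \# Q_\eps$; taking $\eps \to 0$ finishes the proof by continuity.

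The main obstacle will be the implication $A \succeq \tilde W^{\,2} \Rightarrow \tilde W \preceq A^{1/2}$, which is the heart of the argument. It relies on combining operator monotonicity of $t \mapsto t^{1/2}$ on the PSD cone with the elementary inequality $\tilde W \preceq |\tilde W|$ valid for any Hermitian matrix; the remaining ingredients are bookkeeping with Schur complements and continuity of the limit $\eps \to 0$.
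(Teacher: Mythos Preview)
Your proposal is correct and follows essentially the same route as the paper: reduce to the invertible case via $\eps$-perturbation, apply the Schur complement, conjugate to turn the condition into $A \succeq \tilde W^{2}$, invoke operator monotonicity of the square root, conjugate back, and pass to the limit. The only cosmetic differences are that the paper conjugates by $P_\eps^{-1/2}$ rather than $Q^{-1/2}$ (immaterial by symmetry of $\#$) and verifies the ``moreover'' clause by exhibiting the block matrix as an outer product $[P_\eps^{1/2};\,Q_\eps^{1/2}U_\eps][P_\eps^{1/2};\,Q_\eps^{1/2}U_\eps]^\dagger$ for a suitable unitary $U_\eps$, whereas you check the Schur condition $(P\#Q)\,Q^{-1}(P\#Q)=P$ directly; both verifications are one-line computations.
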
 

\begin{proof}  

Note that 
\begin{equation} \label{constraint2}
\left[ \begin{array}{cc} 
P & W \\ 
W & Q  
\end{array} \right] \succcurlyeq 0 
\iff 
\left[ \begin{array}{cc} 
P_{\eps} & W \\ 
W & Q_{\eps}  
\end{array} \right] \succcurlyeq 0, \; \forall \eps > 0  
\end{equation} 
recalling the notation from \Cref{shorthand} where $P_{\eps} := P + \eps \Id$ and $Q_{\eps}:= Q + \eps \Id$ for brevity. 
Now we can use the fact that $P_\eps$ is invertible, even if $P$ is not (i.e.\ in case $P$ is positive semidefinite but not positive definite).
By using Schur complements (see Fact~\ref{ForSchur}), we have
\begin{align} 
W \text{ satisfies } \eqref{eqlem32} 
& \iff Q_{\eps} \succcurlyeq W P_{\eps}^{-1} W, \; \forall \eps > 0 \\ 
& \iff P_{\eps}^{-1/2} Q_{\eps} P_{\eps}^{-1/2} \succcurlyeq 
(P_{\eps}^{-1/2} W P_{\eps}^{-1/2})^2, \; \forall \eps > 0 \\ 
& \implies (P_{\eps}^{-1/2} Q_{\eps} P_{\eps}^{-1/2})^{1/2} \succcurlyeq P_{\eps}^{-1/2} W P_{\eps}^{-1/2}, \; \forall \eps > 0 \\ 
& \iff P_{\eps} \# Q_{\eps} \succcurlyeq W, \; \forall \eps > 0. 
\end{align} 
Note that taking square roots preserves the partial ordering of positive semidefinite matrices, but squaring does not, and thus the third line above does not imply the second line. 
Since $W \preceq P_{\eps} \# Q_{\eps}$ for all $\eps > 0$, we have 
\begin{equation}
W \preceq \lim_{\epsilon \to 0} ( P_{\eps} \# Q_{\eps} ) = P \# Q  
\end{equation} 
since the set of positive semidefinite matrices is a closed set. 

We now show that $W = P \# Q$ satisfies \Cref{eqlem32}. 
To see this, we define the following unitary 
\begin{equation} 
U_{\eps} := Q_{\eps}^{-1/2} P_{\eps}^{1/2} (P_{\eps}^{-1/2} Q_{\eps} P_{\eps}^{-1/2} )^{1/2}. 
\end{equation} 
It is easy to check that this is indeed a unitary matrix. 
Notice also that we have 
\begin{equation} 
P_{\eps} \# Q_{\eps} = Q_{\eps}^{\half} U_{\eps} P_{\eps}^{\half} = P_{\eps}^{\half} U_{\eps}^\dagger Q_{\eps}^{\half}. 
\end{equation} 
Therefore, we have 
\begin{equation} 
\left[ \begin{array}{cc} 
P_{\eps} & P_{\eps} \# Q_{\eps} \\ 
P_{\eps} \# Q_{\eps} & Q_{\eps} 
\end{array} \right] 
= 
\left[ \begin{array}{c} 
P_{\eps}^{\half} \\ 
Q_{\eps}^{\half} U_{\eps}
\end{array} \right] 
\left[ \begin{array}{c} 
P_{\eps}^{\half} \\ 
Q_{\eps}^{\half} U_{\eps} 
\end{array} \right]^\dagger  \succcurlyeq 0.   
\end{equation} 
Again, since the set of positive semidefinite matrices is a closed set, we have that 
\begin{equation} 
\lim_{\eps \to 0} 
\left[ \begin{array}{cc} 
P_{\eps} & P_{\eps} \# Q_{\eps} \\ 
P_{\eps} \# P_{\eps} & Q_{\eps} 
\end{array} \right] = 
\left[ \begin{array}{cc} 
P & P \# Q \\ 
P \# Q  & Q  
\end{array} \right]  \succcurlyeq 0  
\end{equation} 
concluding the proof. 
 
\end{proof}

\end{document}